\documentclass[a4paper,fleqn]{cas-dc}
\usepackage[authoryear,longnamesfirst]{natbib}
\usepackage{graphicx}
\usepackage{mathtools}
\usepackage{enumitem}
\usepackage{xcolor}
\let\cite\citep
\newtheorem{theorem}{Theorem}

\newtheorem{cor}[theorem]{Corollary}
\newtheorem{prop}[theorem]{Proposition}
\newtheorem{remark}[theorem]{Remark}
\newtheorem{definition}[theorem]{Definition}
\newproof{proof}{Proof}

\newcounter{eggsample}[section]

\newenvironment{eggsample}[1][]{\refstepcounter{eggsample}\par\smallskip
   \paragraph{\thesection.\theeggsample.} \emph{#1}}%
{\hfill\smallskip}
\numberwithin{equation}{section}

\begin{document}
\let\WriteBookmarks\relax
\def\floatpagepagefraction{1}
\def\textpagefraction{.001}

\shorttitle{RAF sets and stoichiometric autocatalysis}    

\title[mode = title]{Bridging two theoretical frameworks of autocatalysis:
  \textsc{RAF} sets and stoichiometric autocatalysis}

\author[1,2]{Richard Golnik}

\author[1]{Thomas Gatter}

\author[1]{Wim Hordijk}

\author[1,2,3,4,5,6,7,8]{Peter F. Stadler}

\author[1]{Nicola Vassena}

\affiliation[1]{organization={Leipzig University,
    Bioinformatics Groups, Department of Computer Computer Science and
    Interdisciplinary Center for Bioinformatics},
  addressline={H{\"a}rtelstrasse 16-18}, 
  city={Leipzig},
  postcode={04107}, 
  country={Germany}
}

\affiliation[2]{organization={Leipzig University,
    Zuse School of Embedded and Composite AI (SECAI)},
  addressline={H{\"a}rtelstrasse 16-18}, 
  postcode={04107}, 
  city={Leipzig},
%
  country={Germany}
}

\affiliation[3]{organization={Leipzig University, Center for Scalable Data
    Analytics and Artificial Intelligence, ScaDS.AI Dresden/Leipzig},
  addressline={L{\"o}hrs Carr{\'e}, Humboldtstraße 25, Uferstr. 11},
  city={Leipzig},
  postcode={04105}, 
  country={Germany}
}

\affiliation[4]{organization={Leipzig University, LeiCeM --
    Leipzig Center of Metabolism},
  addressline={},
  city={Leipzig},
  postcode={04105}, 
  country={Germany}
}

\affiliation[5]{organization={Max Planck Institute for Mathematics in the Sciences},
  addressline ={Inselstra{\ss}e 22},
    postcode={04103}, city=Leipzig, country=Germany}

\affiliation[6]{
  organization={University of Vienna, Institute of Theoretical Chemistry},
  addressline={W{\"a}hringerstra{\ss}e 17},
  postcode={A-1090},
  city=Wien,
  country=Austria}

\affiliation[7]{organization={Facultad de Ciencias, Universidad National de
    Colombia},
  city={Bogot{\'a}},
  country={Colombia}
}

\affiliation[8]{organization={Santa Fe Institute}, addressline={1399 Hyde
    Park Rd.}, city=Santa Fe, state=NM, postcode=87501,
  country=USA}

\begin{abstract}
  Autocatalysis lies at the heart of many (bio)chemical processes and
    is key to processes leading up to the origin of life. Two seemingly very
    different formalisms have emerged that define autocatalysis. Kauffman introduced \emph{collective autocatalysis} to
    describe systems of molecules that mutually catalyze each other's
    formation, emphasizing the self-sustaining character of autocatalytic
    systems. This view is mathematically formalized in the theory of
    \emph{Reflexively Autocatalytic and Food-generated sets} (RAF). In
    parallel, \emph{stoichiometric autocatalysis} emerged from the theory of
    Chemical Reaction Networks (CRN), focusing on the net-productive,
    self-amplifying character of autocatalytic subnetworks. These two
    frameworks have coexisted independently in the literature, since RAF
    theory considers each reaction as explicitly catalyzed, while the CRN
    approach often excludes explicitly catalyzed
    reactions altogether. Nevertheless, both frameworks describe reaction networks
    and thus admit a common mathematical representation in terms of
    stoichiometric matrices. We highlight this connection and show that the
    two formalisms are less disparate than they might appear. To illustrate
    this point we prove that, under mild and general conditions, any RAF is stoichiometrically autocatalytic.
\end{abstract}

\maketitle

\section{Introduction}

Autocatalysis refers to a process in which complex systems sustain and
reinforce their own existence. In its simplest form, autocatalysis takes
the shape of a single autocatalytic reaction; that is, a chemical
reaction in which one of the species catalyzes its own formation as
\begin{equation}\label{eq:autocatreaction}
	\sum_{f\in F} f + x_1 \rightarrow 2x_1 + \sum_{w\in W} w,
\end{equation}
under the assumption of a sufficient supply of food molecules $F$ and the
potential release of waste molecules $W$. This idea dates back to the end
of the 19th century, when Wilhelm Ostwald proposed the concept
\cite{Ostwald:90}.

Transformations of the form \eqref{eq:autocatreaction} are not
restricted to chemistry but emerge in different contexts, such as systems
of predator-prey models, bacterial growth, or epidemiology. A convincing
example is for instance the disease transmission from an infected
individual to a susceptible one, which results in a total of two infected
individuals, but also the mandatory feeding of predators on their prey or
of bacteria, such as \emph{E.\ coli}, on glucose to reproduce, yields the
same kind of reactions. Networks of such self-replicating
entitities---\emph{replicators}---appear already in Eigen's quasispecies
\cite{eigen_selforganization_1971} and most prominently in the hypercycle
model that describes autocatalytic self-replication dependent on an
additional explicit catalyst \cite{eigen_principle_1977}.

Generalizing the idea of autocatalysis from single autocatalytic
reactions to subsystems capable of sustaining their own reproduction has
given rise to multiple notions of autocatalysis
\cite{andersen_defining_2021}, and in some cases has led to confusion in
the usage of distinct concepts of autocatalysis
\cite{hordijk17confusion}. A natural extension of autocatalytic reactions
are autocatalytic cycles \emph{sensu} Barenholz
\cite{barenholz_design_2017}. They are sets of reactions that form a loop,
which yields a net stoichiometric increase of one of its intermediate
metabolites. Classic examples of them can be found at the heart of carbon
metabolism, such as the degradation of glucose to two molecules of
phosphoenolpyruvate, whose conversion to pyruvate is, in turn, required for
glucose uptake or the reverse Tricarboxylic Acid (TCA) cycle
\cite{smith_universality_2004}. Blokhuis et al. developed the idea into the
framework of \emph{stoichiometric autocatalysis} in chemical reaction
networks free of explicit catalysis \cite{blokhuis_universal_2020},
starting from the IUPAC definition of an autocatalytic reaction
\cite{gold_iupac_2025}. In contrast, \textsc{autocatalytic sets}
\emph{sensu} Kauffmann \cite{kauffman1986} take a different approach as they are
defined in systems of reactions where each reaction is catalyzed. Both of
these concepts are closely connected with ideas on how life on Earth may
have developed \cite{kauffman1986,
  unterberger_stoechiometric_2022}. However, due to their motivation, they
are inherently distinct in their mathematical formalization.

Kauffman \cite{kauffman1986} aimed to develop a theoretical
framework that supported the existence of reflexively autocatalytic sets of
protein-like complexes, i.e., sets of polypeptides in which each reaction,
leading to the formation of one of its members, is catalyzed by at least
one member of the set. Assuming a fixed probability for each
  polypeptide to catalyze the formation of other polypeptides within the
  set, a maximum polymer size, and sufficient availability of the
  monomeric building blocks, Kauffman showed that there is a high
  probability for the emergence of self-sustaining prebiotic biochemistry.
The concept has been rigorously formalized as \emph{Reflexively
Autocatalytic and Food-generated sets} (RAFs) \cite{steel_emergence_2000,
  hordijk04}, and subsequently refined and studied in various ways
\cite{mossel05CAF, steel13minimal, smith_autocatalytic_2014, hordijk2019,
  steel20}.

One biochemically relevant restriction of the original RAF framework is
CAFs, \emph{Constructively Autocatalytic Sets} (CAFs)
\cite{mossel05CAF}. While in a RAF, at least initially, reactions may also
take place in an uncatalyzed manner, CAFs require that a catalyst of a
reaction and all of its reactants must either have been generated
previously or belong to the food set. This changes the biological
perspective, since reactions are then more likely to be catalyzed by
species that are located upstream of them, but not by their direct or
downstream products. As a consequence, a larger food set may be required, in
a ``Kauffman setting'', smaller polypeptides would need to catalyze the
generation of larger polypeptides, which is counter-intuitive to the
original motivation that assigned a catalytic activity to almost completely
synthesized proteins. The CAF framework thereby seems to be rather
restrictive \cite{mossel05CAF}, in particular, for origin-of-life
biological interpretations.

In contrast, stoichiometric autocatalysis
\cite{barenholz_design_2017,blokhuis_universal_2020} emerged from the
theory of chemical reaction networks and is aligned with the IUPAC
definition of (auto-)catalysis \cite{gold_iupac_2025} and formalized in a
manner that differs substantially from the RAF framework. While RAFs
require each reaction to be catalyzed by a member of the set of involved
species, catalysis in CRNs is often seen as the combined effect of a set of
reactions that can occur in sequence or in parallel; however, a single
reaction may or may not be explicitly catalyzed. In fact, most accounts
exclude explicit catalysis and consider reactions as reversible
\cite{blokhuis_universal_2020, kosc_thermodynamic_2025}. This theory has
only recently been extended to general CRNs composed of both reversible and
irreversible reactions that may, but need not, involve explicit catalysis
\cite{golnik_autocatalytic_2026}.

Moreover, \textsc{RAF}s, viewed as subnetworks, always include `complete'
reactions with all their reactants and products. Stoichiometrically
autocatalytic subnetworks \cite{blokhuis_universal_2020}, on the other
hand, may contain only a subset of the involved species so that each
reaction in the subnetwork is \emph{well-formed}, i.e., it has at least one
reactant and one product in such subset.  The net increase of a species in
an explicit autocatalytic reaction translates into a net increase in the
concentration of each species of a stoichiometric autocatalytic subnetwork,
in which all reactions exhibit a strictly positive rate
\cite{blokhuis_universal_2020}. Accordingly, the corresponding submatrices
are semipositive. Together, well-formedness and semipositivity guarantee
that each species is produced by at least one reaction. Species-autonomy,
i.e., each species also being a reactant for at least one reaction, can be
derived as a consequence from other subnetwork properties, such as
minimality or irreducibility \cite{vassena_unstable_2024,
  blokhuis_universal_2020}. The matrix-based formalization of
stoichiometric autocatalysis recently led to the development of efficient
algorithms enumerating autocatalytic subsystems
\cite{gagrani_polyhedral_2024, kosc_thermodynamic_2025, golnik_birne_2025,
  golnik_enumeration_2026, golnik_using_2026}.  A variant of stoichiometric
autocatalysis that emphasizes the notion of seed-dependence, conceptually
similar to the food set in RAFs, are the seed-dependent autocatalytic
system (SDAS) studied in \cite{Peng:22}.

Despite conceptual similarities, there is no simple correspondence between
the two frameworks and -- as discussed in \cite{Peng:22}, there are systems
that are deemed autocatalytic in one theory but not in the other. The
semipositivity of autocatalytic submatrices heavily depends on the
stoichiometric coefficients, while the \textsc{RAF} framework is {a priori}
ignorant of stoichiometric requirements. From a formal point of view, the
frameworks thus appear to be completely different at first
glance. Nevertheless, meaningful connections can be made, as
  demonstrated by the relationships with Chemical Organization Theory
  \cite{dittrich2007organization} that has led to a notion of
  \textsc{closed RAF} in \cite{Hordijk:18}. As a first step towards
exploring the connection with CRN Theory, we show here that \textbf{every
  RAF that is not itself a CAF contains a stoichiometrically autocatalytic
  subnetwork}.

\section{Reaction Networks}\label{sec:preliminaries}

\paragraph{Basic concepts and notation.}
A reaction network $\Gamma=(X,R)$ is a pair of sets, where
$X=(x_1,...,x_{|X|})$ is the set of species and $R=(r_1,...,r_{|R|})$ is
the set of reactions. Each reaction $r_j$ is an ordered association of
nonnegative combinations of the species:
\begin{equation}\label{eq:reactionj}
 r_j\sum_{m=1}^{|X|} s^-_{mj} \; x_m \quad \underset{r_j}\longrightarrow
 \quad \sum_{m=1}^{|X|} s^+_{mj}\;x_m,
\end{equation}
where the nonnegative $s^-_{mj}$ and $s^+_{mj}$ are the
\emph{stoichiometric coefficients}. For a reaction $r_j$, a species $x_m$
is called a \emph{reactant} if $s^-_{mj}>0$, a \emph{product} if
$s^+_{mj}>0$, and a \emph{catalyst} if it is both a reactant and a product,
i.e. if $s^-_{mj}s^+_{mj}>0$. The sets of reactants, products, and
catalysts of $r_j$ are denoted $\rho(r_j)$, $\pi(r_j)$, and $\kappa(r_j)$,
respectively. The union set $\rho(r_j)\cup \pi(r_j)$ of reactants and
products is called \emph{support} of $r_j$ and denoted by $X(r_j)$, and,
for a set of reactions $R$, $X(R)\coloneqq \bigcup_{r\in R}
X(r)$. Moreover, a reactant $x_m$ of a reaction $r_j$ is called
\emph{net-reactant} if $s^-_{mj}>s^+_{mj}\ge0$. Conversely, a product $x_m$
of reaction $r_j$ is called \emph{net-product} if
$s^+_{mj}>s^-_{mj}\ge0$. The sets of net-reactants and net-products of
$r_j$ are indicated respectively as $\rho_{net}(r_j)$ and
$\pi_{net}(r_j)$. The stoichiometric matrix $S$, encoding only the
net-stoichiometry of the reactions, is then defined as follows:
\[\label{eq:stoichiomatrix}
S_{mj}\coloneqq s^{+}_{mj}-s^-_{mj}.
\]
In particular, negative entries of $S$ identify net-reactants and positive entries of $S$ identify net-products. 

\paragraph{Subnetworks and well-formed subnetworks.} A subnetwork
$\Gamma' \subseteq \Gamma$ is any network $\Gamma'=(X',R')$ with
$X'\subseteq X$ and $R'\subseteq R$, where the stoichiometry of any
$r_j \in R'$ is naturally restricted to $X'$ as $s^{-}_{mj}$ (or
$s^{+}_{mj}$) if $x_m \in X'$, and zero otherwise. That is, we omit any
reference to possible participating species $x_m\not\in X'$ in
$r_j$. Consequently, the stoichiometric matrix $S'$ of a subnetwork
$\Gamma'=(X',R')$ corresponds to the submatrix of $S$ identified by
restricting to species-rows corresponding to $X'$ and reaction-columns
corresponding to $R'$, in symbols we write
\[S'=S[X',R'].\]

\begin{definition}\label{def:wellformed}
A subnetwork $\Gamma'=(X',R')$ $\subseteq \Gamma$ is called \emph{well-formed} if for all reactions $r_j\in R'$ there exist two species $x_m, x_n \in X'$, not necessarily distinct, such that $s^-_{mj}s^+_{nj}>0$. Consistently, a submatrix $S[X',R']$ of the stoichiometric matrix is said to be \emph{well-formed} if its associated subnetwork is well-formed.
\end{definition}
Said otherwise, if $\Gamma'=\Gamma$, well-formed networks are the ones that do not include inflow reactions, with no reactants, and outflow reactions, with no products. However, when the concept is applied to strict subnetworks $\Gamma'=(X',R')$ with $X'\subsetneq X$, Def.~\ref{def:wellformed} requires the presence, for each reaction $r_j\in R'$, of at least one reactant and one product in the subset $X'$. Note also that if a species $x_m$ is a catalyst of $r_j$ but not a net-reactant nor a net-product, i.e. $s^+_{mj}=s^-_{mj}\neq 0$, then the associated entry of the stoichiometric matrix is zero: $S_{mj}=s^+_{mj}-s^-_{mj}=0$. Nevertheless, the subnetwork $(\{x_m\},\{r_j\})$ is well-formed, and so it is its stoichiometric matrix $S[\{x_m\},\{r_j\}]=0$.

\paragraph{Stoichiometric Autocatalysis.} 

The definition of autocatalysis by the International Union of Pure and Applied Chemistry \cite{IUPAC+C00876+2025} reads: \begin{center}\emph{
Catalysis brought about by one of the products of a reaction is called \emph{autocatalysis}.}\end{center}
Building on this definition and under an emergent-property perspective, Blokhuis et al. \cite{blokhuis_universal_2020} derived a general stoichiometric characterization of autocatalysis. While their original formulation excluded catalysts, later work \cite{vassena_unstable_2024, golnik_autocatalytic_2026} lifted such definition to a framework that allows catalysts. We follow here such general framework. We call a vector $v\in\mathbb{R}^n$ \emph{positive}, and we write $v>0$, if all of its entries are positive, i.e. $v_j>0$ for $j=1,...,n.$.
\begin{definition}\label{def:semipositive}
A matrix $S$ is called \emph{semipositive} if there exists a positive vector $v>0$ such that $Sv>0$.
\end{definition}
Then, the combination of Def.~\ref{def:wellformed} and Def.~\ref{def:semipositive} just brings the following.

\begin{definition}[Stoichiometric autocatalysis]\label{def:autocat}
A stoichiometric matrix is \emph{autocatalytic} if it is well-formed and semipositive. A reaction network $\Gamma$, whose stoichiometric matrix  $S $ possesses an autocatalytic submatrix $S'$ is called \emph{stoichiometrically autocatalytic}.
\end{definition}
We remark that if catalysts are absent, and thus each reactant is a net-reactant and each product is a net-product, then the stoichiometric matrix $S$ itself encodes the full network since its negative entries identify all reactants and its positive entries identify all products. In this context, then, the definition of stoichiometric autocatalysis is really only a matrix definition. In the present more general setting, in turn, an inspection of the network stoichiometry fully encoded by $R$, with its catalysts, and not only of its net-stoichiometry, encoded by the stoichiometric matrix $S$ alone, is needed.

\section{Reflexively Autocatalytic Food-generated Sets}

In the context of origin of life and collectively autocatalytic systems formalized in the \textsc{RAF} framework, some of the catalysts may not be present initially. For example, a reaction that has all of its reactants in the food set but which is catalyzed by one of its own products, cannot happen catalyzed when only food molecules are present. However, since catalysts primarily speed up the rate at which reactions happen, such a reaction can still occur spontaneously (i.e., uncatalyzed, at a lower rate). Once it does, the first catalyst has been produced, and the reaction can now proceed in an (auto)catalytic manner. In fact, allowing some catalysts to be produced only at a later stage is exactly what enables autocatalytic sets to be \emph{evolvable}, at least to some extent \cite{vasas2012,hordijk2018}.

For this reason, for each reaction $r_j$ we equivalently rewrite \eqref{eq:reactionj} as: 
{\small \begin{equation}\label{eq:Freactionj}
        \sum_{m=1}^{|X|} \sigma^-_{mj} \; x_m  \;+\; \bigg(\sum_{m=1}^{|X|} k_{mj}x_m \bigg)\;\underset{r_j}\longrightarrow \; \sum_{m=1}^{|X|} \sigma^+_{mj}\;x_m\; + \;\bigg(\sum_{m=1}^{|X|} k_{mj}x_m \bigg),        
\end{equation}}
with the following constraints that guarantee that $r_j$ is the same in \eqref{eq:reactionj} and \eqref{eq:Freactionj}:
\begin{equation}\label{eq:constraintsF}
  \sigma^-_{mj} \; + \; k_{mj}= s^-_{mj} \qquad
  \sigma^+_{mj} \; + \; k_{mj}= s^+_{mj}.
\end{equation}
Species $x_m$ appearing in \eqref{eq:Freactionj} with $\sigma^-_{mj}>0$ (resp. $\sigma^+_{mj}>0$)  are called \textsc{F-reactants} (resp. \textsc{F-products}) of $r_j$, and the set is denoted as $\rho_{\text{F}}(r_j)$ (resp. $\pi_{\text{F}}(r_j)$). Species $x_m$ appearing in \eqref{eq:Freactionj} with $k_{mj}>0$ are called \textsc{F-catalysts}, with set denoted as $\kappa_{\text{F}}(r_j)$. 

\begin{remark}[Original RAF framework I]\label{rmk:eqclas}
 In the original RAF framework, one single reaction is specified only by its \textsc{F-reactants} and \textsc{F-products}, and it can be catalyzed, independently, by multiple choices of \textsc{F-catalysts}. To stay closer to a general reaction network framework, here we consider one reaction catalyzed by different \textsc{F-catalysts} as different reactions, as their formulation \eqref{eq:Freactionj} differs.
\end{remark}

We further stress that the \textsc{RAF} framework does not require any pairwise empty intersection among its defining sets. In particular, a species $x_m$ might be at once an \textsc{F-reactant}, an \textsc{F-product}, and a \textsc{F-catalyst} of one single reaction $r_j$, see examples in \ref{sect:examples}.\ref{egg:ex2}. Besides the obvious relation that the set of reactants (resp. products) is the union of the sets of \textsc{F-reactants} (resp. \textsc{F-products}) and \textsc{F-catalysts}, i.e.,
\[
    \rho(r)= \rho_{\text{F}}(r) \cup \kappa_{\text{F}}(r)  \quad \text{and} \quad  \pi(r)=\pi_{\text{F}}(r) \cup \kappa_{\text{F}}(r),
\]
the only set relations that can be assured, are stated in the following Proposition.
\begin{prop}\label{prop:inclusions}
Let $r$ a reaction written as \eqref{eq:Freactionj}. The following holds.
\begin{enumerate}
    \item Any net-reactant is an \textsc{F-reactant}, and any \textsc{F-reactant} is a reactant:
    \[
        \rho_{\text{net}}(r) \subseteq \rho_{\text{F}}(r) \subseteq \rho(r)\]
    \item  Any net-product is an \textsc{F-product}, and any \textsc{F-product} is a product:
    \[\pi_{\text{net}}(r) \subseteq \pi_{\text{F}}(r) \subseteq \pi(r)\]
    \item Any \textsc{F-catalyst} is a catalyst, and in particular both a reactant and a product:
    \[
    \kappa_{\text{F}}(r) \subseteq \kappa(r) =\rho(r) \cap \pi(r).
    \] 
\end{enumerate}
\end{prop}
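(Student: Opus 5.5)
The plan is to argue entrywise, species by species, using only the constraints \eqref{eq:constraintsF} together with the nonnegativity of all coefficients $\sigma^\pm_{mj}$, $k_{mj}$, $s^\pm_{mj}$. Fix a reaction $r=r_j$ and a species $x_m$. Subtracting the two identities in \eqref{eq:constraintsF} eliminates $k_{mj}$ and gives the key identity
\[
S_{mj}=s^+_{mj}-s^-_{mj}=\sigma^+_{mj}-\sigma^-_{mj}.
\]
All three items will follow from this identity or directly from \eqref{eq:constraintsF}.

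For item 1, the first inclusion $\rho_{\text{net}}(r)\subseteq\rho_{\text{F}}(r)$ goes as follows. If $x_m$ is a net-reactant, then $s^-_{mj}>s^+_{mj}$. By the key identity, $\sigma^-_{mj}>\sigma^+_{mj}\ge 0$, so $\sigma^-_{mj}>0$ and $x_m$ is an \textsc{F-reactant}. For the second inclusion $\rho_{\text{F}}(r)\subseteq\rho(r)$, we have $s^-_{mj}=\sigma^-_{mj}+k_{mj}\ge\sigma^-_{mj}>0$, using $k_{mj}\ge 0$. Item 2 is the same argument with the roles of $+$ and $-$ exchanged.

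For item 3, suppose $k_{mj}>0$. Since $\sigma^\pm_{mj}\ge0$, we get $s^-_{mj}\ge k_{mj}>0$ and $s^+_{mj}\ge k_{mj}>0$. Hence $s^-_{mj}s^+_{mj}>0$, so $x_m\in\kappa(r)$. The equality $\kappa(r)=\rho(r)\cap\pi(r)$ is simply the definition of a catalyst as a species that is both a reactant and a product.

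No step is a real obstacle. The only point needing care is to record explicitly that the decomposition \eqref{eq:Freactionj} is assumed to have all $\sigma^\pm_{mj},k_{mj}\ge0$, because every inequality above depends on it. Finally, I would mention why the reverse inclusions generally fail, for instance by taking $k_{mj}$ strictly smaller than $\min(s^-_{mj},s^+_{mj})$. This supports the paper's claim that these are the only relations that can be assured.
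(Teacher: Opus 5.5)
Your proposal is correct and follows the same route as the paper: you cancel $k_{mj}$ between the two identities in \eqref{eq:constraintsF} to get $\sigma^-_{mj}>\sigma^+_{mj}\ge 0$ for a net-reactant, and symmetrically for a net-product. The remaining inclusions follow from nonnegativity of the coefficients, which you spell out explicitly where the paper simply says they follow ``directly from the definition.''
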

\begin{proof}
The implications $ \rho_{\text{F}}(r) \subseteq \rho(r)$, $ \pi_{\text{F}}(r) \subseteq \pi(r)$, and $\kappa_{\text{F}}(r) \subseteq \kappa(r)=\rho(r) \cap \pi(r)$ follow directly from the definition. To prove the implication $\rho_{net}(r)\subseteq\rho_{\text{F}}(r)$, let us take a net-reactant $x_m$ of a reaction $r_j$, i.e., $x_m\in \rho_{net}(r_j)$. By \eqref{eq:constraintsF} and definition of net-reactant, we have that
    $\sigma^-_{mj}+k_{mj}=s^-_{mj}>s^+_{mj}=\sigma^+_{mj}+k_{mj}$,
which yields
$\sigma^-_{mj}>\sigma^+_{mj}\ge 0$,
and thus $x_m$ is an \textsc{F-reactant} of the reaction $r_j$. The proof of the implication $\pi_{net}(r)\subseteq \pi_{\text{F}}(f)$ is identical as the one for the reactants, just based on the opposite inequality and the definition of net-product.
\end{proof}

An assumption that appears natural and relevant in realistic modeling of
biochemical networks could be that the net-reactants (net-products) coincide with the
\textsc{F-reactants} (\textsc{F-products}) and  \textsc{F-catalysts}
coincide with catalysts. Since this is not assumed in the classic RAF
framework, we proceed here in full generality.


\begin{definition}\label{def:RAF}
A \textsc{Reflexively Autocatalytic Food-generated set} (\textsc{RAF}) is a
reaction network $\Gamma_F=(X,R,F)$ with a specified \emph{food set}
$F\subseteq X$ such that the following properties hold:
\begin{enumerate}[label={R\arabic*.}]
\item For each reaction $r\in R$, the set of \textsc{F-catalysts} is nonempty:
\[
    \kappa_{\text{F}}(r)\neq\emptyset,\quad \text{for all $r\in R$.}
\]
    \item Any \textsc{F-catalyst} of a reaction $r_j \in R$ is either in the food set or it is an \textsc{F-product}, but not an \textsc{F-reactant}, of at least one reaction in $R$, in symbols:
    \begin{equation}\label{eq:rafc2}
       \kappa_{\text{F}}(r_j) \; \setminus \; F \quad\subseteq \quad \bigcup_{r\in R} \quad \bigg(\pi_{\text{F}}(r)\;\setminus\; \rho_{\text{F}}(r)\bigg) 
    \end{equation}
    \item There exists a partial order $\preceq$ on $R$ such that all \textsc{F-reactants} of a reaction $r_j\in R$ either are in the food set or they are \textsc{F-products}, but not \textsc{F-reactants}, of at least one reaction $r_{j'}$ with $r_{j'} \prec r_{j}$. In symbols,
\begin{equation}\label{eq:rafc3}
\rho_{\text{F}}(r_j)\;\setminus\; F \quad \subseteq \quad \bigcup_{r \;|\;r\;\prec\;r_j} \bigg(\pi_{\text{F}}(r) \;\setminus \; \rho_{\text{F}}(r)\bigg).
\end{equation}
\end{enumerate}
\end{definition}

Properties R1 and R2, jointly, define the network to be
\textsc{Reflexively Autocatalytic} (RA) while property R3 defines it to be
\textsc{Food-generated} (F). We refer to the partial order $\preceq$ as the
\textsc{generation order} and, by convention, we represent it via a
\textsc{generation map} $\gamma$ that takes values in an interval
$I\coloneqq [1,...,N] \subset \mathbb{N}$, i.e.  $\gamma: R \mapsto
\mathbb{N}_{>0}$ with $r_{j'}\preceq r_{j}$ if $\gamma(r_{j'}) \le
\gamma(r_j)$. More specifically, a reaction $r$ is in the preimage of 1,
i.e. $r \in \gamma^{-1}(1)$, if its \textsc{F-reactants} are in the food
set, and we say that $r$ is of \textsc{first generation}. Iteratively, a
reaction is in the preimage of $h\in I$, i.e. $r_j \in \gamma^{-1}(h)$, if
it is not in the preimage of $h'<h$, and its \textsc{F-reactants} are
either in the food set or they are \textsc{F-products}, but not
\textsc{F-reactants}, of at least one reaction $r_{j'}$ with $r_{j'} \in
\gamma^{-1}(h')$, $h'<h$, and we say that $r_j$ is of $h^{\text{th}}$
generation. In order to better connect to stoichiometric properties, we
restate properties R2 and R3 of Def.~\ref{def:RAF} directly in terms of
net-products: 
\begin{prop}\label{prop:equivnet}
Conditions \eqref{eq:rafc2} and \eqref{eq:rafc3} in Def.~\ref{def:CAF} can equivalently be re-stated in terms of net-products as follows:
\begin{equation}\tag{\ref{eq:rafc2}$'$}\label{eq:rafc2'}
 \kappa_{\text{F}}(r_j) \; \setminus \; F \quad\subseteq \quad \bigcup_{r\in R} \quad \pi_{net}(r),
\end{equation}
and, respectively,
\begin{equation}\tag{\ref{eq:rafc3}$'$}\label{eq:rafc3'}
\rho_F(r_j)\;\setminus\; F \quad \subseteq \quad \bigcup_{r \;|\;r\;\prec\;r_j} \pi_{net}(r).
\end{equation}
\end{prop}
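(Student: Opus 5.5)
The plan is to compare, reaction by reaction, the set $\pi_{\text{F}}(r)\setminus\rho_{\text{F}}(r)$ with $\pi_{net}(r)$. The two are not equal in general, but one inclusion holds. The failure of the reverse inclusion will be repaired using the generation order. So the equivalence will be proved for the pair of conditions taken jointly, which is how they enter Def.~\ref{def:RAF}.

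\emph{Step 1 (one inclusion).} Let $x_m\in\pi_{\text{F}}(r_j)\setminus\rho_{\text{F}}(r_j)$. Then $\sigma^+_{mj}>0=\sigma^-_{mj}$, and by \eqref{eq:constraintsF} we get $S_{mj}=s^+_{mj}-s^-_{mj}=\sigma^+_{mj}-\sigma^-_{mj}>0$, so $x_m\in\pi_{net}(r_j)$. Hence $\pi_{\text{F}}(r)\setminus\rho_{\text{F}}(r)\subseteq\pi_{net}(r)$ for every $r$. Replacing the right-hand sides of \eqref{eq:rafc2} and \eqref{eq:rafc3} by these larger unions immediately gives \eqref{eq:rafc2'} and \eqref{eq:rafc3'}.

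\emph{Step 2 (the converse).} This is the main obstacle. By Prop.~\ref{prop:inclusions}, a net-product $x_m$ of $r$ satisfies $\sigma^+_{mj}>\sigma^-_{mj}\ge 0$. However, $\sigma^-_{mj}>0$ is still possible, in which case $x_m$ is also an \textsc{F-reactant} of $r$. So a single witness reaction need not work, and we have to search further back in the generation order.

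Fix $x\notin F$ that is a net-product of at least one reaction in $R$. Among all such reactions, choose one, say $r^*$, that is $\preceq$-minimal; this is possible since $R$ is finite, for instance by minimizing $\gamma$. Suppose $x\in\rho_{\text{F}}(r^*)$. Since $x\notin F$, \eqref{eq:rafc3'} applied to $r^*$ yields a reaction $r''\prec r^*$ with $x\in\pi_{net}(r'')$, which contradicts minimality. Hence $x\notin\rho_{\text{F}}(r^*)$, and with Prop.~\ref{prop:inclusions} we conclude $x\in\pi_{\text{F}}(r^*)\setminus\rho_{\text{F}}(r^*)$.

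\emph{Step 3 (applying the argument to each condition).} For \eqref{eq:rafc3}: take $x\in\rho_{\text{F}}(r_j)\setminus F$. By \eqref{eq:rafc3'} the set of reactions $r\prec r_j$ with $x\in\pi_{net}(r)$ is nonempty. Take $r^*$ minimal in this set; every reaction below $r^*$ is also below $r_j$, so the argument of Step 2 runs inside this set and gives \eqref{eq:rafc3}. For \eqref{eq:rafc2}: a catalyst $x\in\kappa_{\text{F}}(r_j)\setminus F$ is, by \eqref{eq:rafc2'}, a net-product of some reaction. Step 2, using \eqref{eq:rafc3'} once more, produces a reaction having $x$ as an \textsc{F-product} but not an \textsc{F-reactant}, which is \eqref{eq:rafc2}.

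I will point out explicitly that \eqref{eq:rafc2'} alone does not imply \eqref{eq:rafc2}. The equivalence therefore holds for the pair of conditions together, as they appear in the definition of a RAF.
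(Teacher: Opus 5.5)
Your proof is correct and follows essentially the same route as the paper: the forward direction is the same computation $s^+_{mj}-s^-_{mj}=\sigma^+_{mj}-\sigma^-_{mj}>0$ from \eqref{eq:constraintsF}, and the converse is the same descent along the generation order via \eqref{eq:rafc3'}. The paper phrases that descent as an iteration ended by finiteness of $R$, while you pick a $\preceq$-minimal witness. Your remark that \eqref{eq:rafc2'} alone does not yield \eqref{eq:rafc2} is accurate, and it makes explicit what the paper does implicitly when it invokes \eqref{eq:rafc3'} in the catalyst case.
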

\begin{proof}
Assume \eqref{eq:rafc2} (resp. \eqref{eq:rafc3}) and let $x_m$ be an
\textsc{F-product} but not an \textsc{F-reactant} of a reaction $r_{j'}$,
$\sigma^+_{mj'}>0$ and $\sigma^-_{mj'}=0$. Via Eq.~\eqref{eq:constraintsF}
we obtain that
$s^+_{mj'}\;=\;\sigma^+_{mj}+k_{mj'}\; >\;
k_{mj'}\;=\;\sigma^-_{mj'}+k_{mj'}\;=\;s^-_{mj'}$,
and thus any such $x_m$ is a net-product of $r_{j'}$ and \eqref{eq:rafc2'}
(resp. \eqref{eq:rafc3'}) follows.

Conversely, let $x_m$ be an \textsc{F-reactant} of a reaction $r_j$, which
is not in the food-set $F$, and assume \eqref{eq:rafc3'}: then $x_m$ is a
net-product of a reaction $r_{j'}$ with $r_{j'}\prec r_j$. Either $x_m$ is
not an \textsc{F-reactant} of $r_{j'}$ or it is. If the latter holds, then
we can iteratively apply a second time \eqref{eq:rafc3'} for $r_{j'}$, and
conclude that there exists another reaction $r_{j''} \prec r_{j'}$ so that
$x_m$ is a net product of $r_{j''}$. By iteration of the same procedure and
finiteness of $R$,  $x_m \not\in F$ implies the existence of a reaction
$r_{\bar{j}}$ for which $x_m$ is a net-product but not an
\textsc{F-reactant}. Via Prop.~\ref{prop:inclusions}, $x_m$ is an
\textsc{F-product} of $r_{\bar{j}}$ and thus \eqref{eq:rafc3} follows.

Finally, let $x_m$ be an \textsc{F-catalyst} of a reaction $r_j$, which is
not in the food set $F$, and assume \eqref{eq:rafc2'}: $x_m$ is a
net-product of a reaction $r_{j'}\in R$, and it may be an
\textsc{F-reactant} of $r_{j'}$ or not. In the latter case, again via
Prop.~\ref{prop:inclusions}, \eqref{eq:rafc2} follows. In the former case,
we can then apply \eqref{eq:rafc3'}, or equivalently \eqref{eq:rafc3},
to obtain that $x_m$ is an \textsc{F-product}, but not an
\textsc{F-reactant} of a reaction $r_{j''}$ with $r_{j''}\prec r_{j'}$ and
\eqref{eq:rafc2} follows again.
\end{proof}

Note that the same iterative argument can be used to state the conditions
in terms of \textsc{F-products} $\pi_F(r)$, rather than more restrictive
net-products $\pi_{\text{net}}$. However, net-products have the advantage
that they are always identified by the stoichiometric matrix $S$ alone, and
this will be used in the remainder of this paper.

\begin{remark}[Original RAF framework II] The 
definition of a \textsc{RAF} in the original framework \emph{\cite{hordijk04}} is different, as it involves the so-called \emph{closure set}, via $R$, of the food set. For simplicity of presentation, we have stated the equivalent Def.~\ref{def:RAF}. The equivalence of such definitions has already been proved in the literature, e.g. in \emph{\cite{steel13minimal}}.
\end{remark}

We proceed by defining a special type of \textsc{RAF} sets, `\emph{constructively autocatalytic}', which were originally introduced in \cite{mossel05CAF}.

\begin{definition}\label{def:CAF}
A \textsc{Constructively Autocatalytic Food-Generated set}, \textsc{CAF} in short, is a reaction network $\Gamma_F=(X,R,F)$ with a specified \emph{food set} $F\subseteq X$ such that the following properties hold:
\begin{enumerate}[label=C\arabic*.]
\item For each reaction $r\in R$, the set of \textsc{F-catalysts} is nonempty:
$\kappa_F(r)\neq\emptyset,$ for all $r\in R$.  
\item There exists a partial order $\sqsubseteq$ on $R$ such that both \textsc{F-reactants} and \textsc{F-catalysts} of a reaction $r_j\in R$ either are in the food set or they are \textsc{F-products}, but not \textsc{F-reactants}, of at least one reaction $r_{j'}$ with $r_{j'}\sqsubset r_{j}$. In symbols,
\begin{equation}\label{eq:cafc2}
\rho_F(r_j) \; \cup\; \kappa_{F}(r_j)\;\setminus\; F \quad \subseteq \quad \bigcup_{r \;|\;r\;\sqsubset\;r_j} \bigg(\pi_{\text{F}}(r) \;\setminus \; \rho_{\text{F}}(r)\bigg).
\end{equation}
\end{enumerate}
\end{definition}

It is clear that any \textsc{CAF} is also a \textsc{RAF}, but the reverse
is not true in general. More specifically, Def.~\ref{def:CAF} of a
\textsc{CAF} strengthens condition R2 of Def.~\ref{def:RAF} of a RAF in
requiring that also the \textsc{F-catalysts} must be produced by reactions
that occur earlier in the partial order $\sqsubseteq$, which we refer to as
\textsc{CAF order}. By definition, note that the \textsc{CAF order}
$\sqsubseteq$ is an \emph{extension} \cite{jech:1973} of the
\textsc{generation order} $\preceq$ in the sense that
\[
    r_{j'}\prec r_{j} \quad \Rightarrow \quad r_{j'} \sqsubset r_{j},
\]
while the example in~\ref{sect:examples}.\ref{egg:3} shows that the converse implication is not true. 

We recall that the union of the \textsc{F-reactants} and the \textsc{F-catalysts}
of a reaction $r$ just constitute its reactants $\rho(r)$, and therefore
the \textsc{CAF-order} $\sqsubseteq$ gives a general constructibility
condition of the entire reaction (with no distinction among its reactants)
from the food set. Clearly, via the same argument as
Prop.~\ref{prop:equivnet}, also the right-hand side of property
\eqref{eq:cafc2} can be equivalently restated in terms of net-reactants,
yielding:
\begin{equation}
  \rho (r_j)\;\setminus\; F \quad \subseteq \quad \bigcup_{r
    \;|\;r\;\sqsubset \;r_j} \pi_{net}(r).
  \tag{\ref{eq:cafc2}$'$}\label{eq:cafc2'}
\end{equation} 
In particular, \eqref{eq:cafc2'} shows that the \textsc{CAF} construction
does not even require defining \textsc{F-reactants} and
\textsc{F-products}: it only requires - via property C1 in
Def.~\ref{def:CAF} - that each reaction has at least one
\textsc{F-catalyst}.

A note of caution is in order here. In the literature of RAFs, a sub-RAF
$\Gamma'=\Gamma_F$ is typically a RAF $\Gamma'=(X(R'),R',F\cap X(R'))$ with
$R' \subset R$, i.e., where the species set is chosen to be the support of
reactions in $R'$, and same food set $F$. To avoid confusion with the more
flexible definition of subnetworks given in Sec.~\ref{sec:preliminaries},
we will always fully specify the species and reaction sets of the subnetwork.

To continue in this direction, if we fix a food-set $F\subset X$, the
following proposition characterizes the situation when a \textsc{RAF}
$\Gamma_F$ contains a subnetwork $\Gamma'_{F}$, which is a \textsc{CAF} for
the same food set.

\begin{prop}\label{prop:rafwocaf}
  Let $F\subseteq X$ be a fixed food-set.  
  A \textsc{RAF} $\Gamma_F=(X,R,F)$
  contains a subnetwork $\Gamma'_F=(X(R'),R',F\cap X(R'))$ that is a
  \textsc{CAF} if and only if there exists at least one reaction $r_j\in
  R$ for which all reactants are in the food set, i.e., $x_m\in F$ for all
  $x_m$ with $s^-_{jm}>0$.
\end{prop}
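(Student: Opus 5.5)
We prove the two implications separately. Throughout we use the reformulation \eqref{eq:cafc2'} of the CAF condition, so that constructibility is phrased purely in terms of reactants $\rho(r)=\rho_F(r)\cup\kappa_F(r)$ and net-products.

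\emph{Necessity.} Suppose $\Gamma'_F=(X(R'),R',F\cap X(R'))$ is a CAF with $R'\neq\emptyset$ (a CAF is understood to be nonempty). The CAF order $\sqsubseteq$ is a partial order on the finite set $R'$, so it has a minimal element $r_j$. For such $r_j$ the union on the right-hand side of \eqref{eq:cafc2'} is empty, hence $\rho(r_j)\setminus (F\cap X(R'))=\emptyset$. Since $\rho(r_j)\subseteq X(R')$, this gives $\rho(r_j)\subseteq F$: every species with positive reactant coefficient lies in $F$. As $r_j\in R'\subseteq R$, this is the required reaction of $\Gamma_F$.

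\emph{Sufficiency.} Let $r_j\in R$ have all its reactants in $F$. Take $R'=\{r_j\}$ with species set $X(r_j)$ and food set $F\cap X(r_j)$. We check the CAF conditions of Def.~\ref{def:CAF} for $\Gamma'_F$.
\begin{itemize}
\item Condition C1 holds because $\kappa_F(r_j)\neq\emptyset$ by R1 of the RAF $\Gamma_F$, and restricting to $X(r_j)$ does not remove any F-catalyst, since $\kappa_F(r_j)\subseteq\rho(r_j)\subseteq X(r_j)$.
\item Condition C2, with the trivial order on $\{r_j\}$, requires $\rho_F(r_j)\cup\kappa_F(r_j)\subseteq F\cap X(r_j)$. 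This holds because $\rho_F(r_j)\cup\kappa_F(r_j)=\rho(r_j)\subseteq F$, and $\rho(r_j)\subseteq X(r_j)$.
\end{itemize}
So the singleton subnetwork is a CAF.

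\emph{Main subtlety.} The argument is short; the points needing care are bookkeeping. First, the subnetwork's F-catalysts and F-reactants must be those of $r_j$ in $\Gamma$, which holds because taking the full support $X(r_j)$ keeps all stoichiometric coefficients of $r_j$ unchanged. Second, in the necessity direction, "reactants in the food set" refers to the restricted food set $F\cap X(R')\subseteq F$, which still yields $\rho(r_j)\subseteq F$. Third, we must exclude the empty CAF, since an empty $R'$ would satisfy the conditions vacuously and break necessity.
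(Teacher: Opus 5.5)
Your proof is correct and follows the paper's own argument: a $\sqsubseteq$-minimal reaction of the CAF has all its reactants in the food set, and conversely the single-reaction subnetwork $(X(r_j),\{r_j\},F\cap X(r_j))$ is a CAF because R1 supplies an \textsc{F-catalyst} and all its reactants lie in $F$. Your extra bookkeeping, namely that the restricted food set satisfies $F\cap X(R')\subseteq F$ and that $R'\neq\emptyset$ is needed for a minimal element to exist, makes explicit what the paper leaves implicit.
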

\begin{proof} $(\Rightarrow)$ Assume that the \textsc{RAF} $\Gamma_F=(X,R,F)$
contains a \textsc{CAF} $\Gamma'_F=(X(R'),R',F\cap X(R'))$, and let $r_j$
be one of the minima in $R'$ according to the \textsc{CAF order}
$\sqsubseteq$, i.e.  $ r_j\in \operatorname{min}_\sqsubseteq(R')$. By
\eqref{eq:cafc2} in Def.~\ref{def:CAF}, all reactants of $r_j$ are in the
food set $F$, since $\{r \; | \; r \sqsubset r_j\}=\emptyset$ by
construction.

$(\Leftarrow)$ Assume that the \textsc{RAF} $\Gamma_F=(X,R,F)$ contains a
reaction $r_j$ for which all reactants are in the food set, then the
subnetwork: $\Gamma_F'=(X(r_j),r_j,F\cap X(r_j)$, made of the single
reaction $r_j$ is a \textsc{CAF}, since it possesses an \textsc{F-catalyst}
and all of its reactants are in the food set.
\end{proof}

\section{Main Result}

Although the two frameworks of \emph{stoichiometric autocatalysis} and
\textsc{Reflexively Autocatalytic Food-generated sets} (RAFs) appear
superficially distinct, a \textsc{RAF} -- as any reaction network --
nevertheless possesses an associated stoichiometric matrix. Thus, the
question whether such stoichiometric matrix of a \textsc{RAF} contains any
stoichiometrically autocatalytic submatrix is well-posed. To answer this
question positively, we state the main result of this paper, which
guarantees that any \textsc{RAF}, which is not itself a \textsc{CAF}, is indeed always
stoichiometrically autocatalytic in the sense of Def.~\ref{def:autocat}.

\begin{theorem}\label{thm:main}
  The stoichiometric matrix $S$ of any \textsc{RAF} $\Gamma_F=(X,R,F)$ that
  is not itself a \textsc{CAF} possesses a well-formed semipositive
  submatrix $S^*=S[X^*,R^*]$ with the following properties: 
\begin{enumerate}[label=\emph{T\arabic*.}]
\item No species in $X^*$ are in the food set, i.e. $X^*\cap F=\emptyset$.

\item $S^*$ is square. Specifically,
there exist orderings $X^* = \{x_{(1)},\dots,x_{(|X^*|)}\}$ and $R^* = \{r_{(1)},\dots,r_{(|R^*|)}\}$ s.t. alternating elements form an elementary circuit $\mathcal{C}$,
\begin{equation*}
\mathcal{C}:\hspace{6pt} 
r_{(1)} - x_{(1)} - r_{(2)} - x_{(2)} - \cdots - r_{(|R^*|)} - x_{(|X^*|)} - r_{(1)},
\end{equation*}
where, for each $i$ (indices modulo $|X^*|=|R^*|$), $x_{(i)}$ is a reactant of $r_{(i)}$ and a net-product of $r_{(i+1)}$. 
\item For any \textsc{F-reactant} $x_m$ of a reaction $r_j\in R^*$, there
  exists a reaction $r_{j'}\in R^*$ with $r_{j'} \prec r_j$ and
  $S^*_{mj'}>0$. In particular, any reaction $r^*\in R^*$ that is minimal
  in $R^*$ with respect to the \textsc{generation order} $\preceq$ has no
  \textsc{F-reactants} in $X^*$ and thus corresponds to a nonnegative
  column $S^*[X^*,r^*] \ge 0$.
\end{enumerate}
\end{theorem}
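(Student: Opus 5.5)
The plan is to build the circuit $\mathcal{C}$ from a "production graph" on the reactions that cannot be constructed from the food set. I will then use property T3 together with a generation-weighted flux vector to get semipositivity.

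\emph{Step 1: the non-constructible reactions.} Let $C\subseteq R$ be the constructive closure. Its first layer is the reactions all of whose reactants lie in $F$. Each later layer adds the reactions all of whose reactants lie in $F$ or are net-products of reactions already in $C$. If $C=R$, the layer index defines a partial order $\sqsubseteq$ satisfying \eqref{eq:cafc2'}, so $\Gamma_F$ would be a \textsc{CAF}. Hence $U\coloneqq R\setminus C\neq\emptyset$. Every $r\in U$ has a reactant $x\notin F$ that is not a net-product of any reaction in $C$. By Prop.~\ref{prop:inclusions}, $x$ is an \textsc{F-reactant} or an \textsc{F-catalyst} of $r$. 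By Prop.~\ref{prop:equivnet}, i.e.\ \eqref{eq:rafc3'} or \eqref{eq:rafc2'}, $x$ is a net-product of some reaction $r'$, and this $r'$ must lie in $U$. If $x$ is an \textsc{F-reactant}, we may moreover take $r'\prec r$. This defines a digraph on $U$ in which every vertex has an incoming arc labelled by a species outside $F$. Since $U$ is finite, it contains a directed cycle. Taking a shortest cycle, and shortcutting whenever a species label repeats, gives an elementary circuit $r_{(1)}-x_{(1)}-\dots-r_{(n)}-x_{(n)}-r_{(1)}$ as in T2. Here $x_{(i)}$ is a reactant of $r_{(i)}$ and a net-product of $r_{(i+1)}$, and $X^*\cap F=\emptyset$ gives T1. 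Well-formedness of $S[X^*,R^*]$ is immediate, since each $r_{(i)}$ has the reactant $x_{(i)}$ and the product $x_{(i-1)}$ in $X^*$.

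\emph{Step 2: T3, the expected obstacle.} We must show that whenever some $x_m\in X^*$ is an \textsc{F-reactant} of some $r_j\in R^*$, there is an $r_{j'}\in R^*$ with $r_{j'}\prec r_j$ and $S_{mj'}>0$. For the circuit arcs themselves this holds by construction, because \textsc{F-reactant} arcs were chosen $\prec$-decreasing. The difficulty lies with chords, i.e.\ $x_{(i)}$ being an \textsc{F-reactant} of a different $r_{(k)}\in R^*$. I would handle this by choosing the circuit extremally: a shortest cycle, ties broken lexicographically by generation values. For a chord, the producer $r_{(i+1)}$ of $x_{(i)}$ is then compared with $r_{(k)}$. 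If $r_{(i+1)}\not\prec r_{(k)}$, then applying \eqref{eq:rafc3'} to $r_{(k)}$ produces a reaction $r''\prec r_{(k)}$ with $x_{(i)}\in\pi_{net}(r'')$, and $r''\in U$. Rerouting the circuit through $r''$ yields either a shorter cycle or one lower in generation, contradicting extremality. Making this exchange argument airtight, so that the rerouting really stays a circuit, is the step I expect to be hardest. A fallback is to note that every cycle must contain at least one catalyst arc, since \textsc{F-reactant} arcs strictly decrease $\prec$, and to anchor the construction at such an arc. The second sentence of T3 then follows at once: a $\preceq$-minimal $r^*\in R^*$ can have no \textsc{F-reactant} in $X^*$. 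Since negative entries of $S$ only occur at net-reactants, which are \textsc{F-reactants} by Prop.~\ref{prop:inclusions}, its column is nonnegative.

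\emph{Step 3: semipositivity from T3.} Set $v_{r}\coloneqq M^{-\gamma(r)}$ for $r\in R^*$, with $M$ large. Fix a row $x\in X^*$; it is nonzero because $x_{(i)}$ is a net-product of $r_{(i+1)}$. Among the reactions of $R^*$ with a nonzero entry in this row, consider those of smallest generation; their entries cannot be negative. Indeed, a negative entry would make $x$ an \textsc{F-reactant}, and T3 would then supply a strictly $\prec$-earlier reaction in $R^*$ with a positive entry in this row, contradicting minimality. So the smallest-generation terms are positive. They carry weight $M^{-g}$, while every negative term carries weight at most $M^{-g-1}$. Choosing $M$ larger than $|R^*|$ times the ratio of the largest absolute entry of $S^*$ to the smallest positive entry makes every row sum of $S^*v$ positive. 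Hence $S^*v>0$ with $v>0$, and $S^*$ is autocatalytic in the sense of Def.~\ref{def:autocat}.
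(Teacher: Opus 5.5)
\textbf{Step 2 is a genuine gap.} Your extremal rule (take a shortest cycle, break ties by generation) does not guarantee T3. Without T3, the weighting argument of Step 3 has nothing to stand on.

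Consider $F=\{\mathbf{f}\}$ and
\[
\begin{array}{lll}
r_0:\ \mathbf{f}+(e)\to a+(e), & r_1:\ \mathbf{f}+(a)\to b+(a), & r_2:\ 2a+(b)\to c+(b),\\
r_3:\ c+(\mathbf{f})\to a+(\mathbf{f}), & r_4:\ \mathbf{f}+(b)\to d+(b), & r_5:\ \mathbf{f}+(d)\to e+(d).
\end{array}
\]
This is a \textsc{RAF} with $\gamma(r_2)=2$, $\gamma(r_3)=3$, and all other reactions of first generation. Every reaction has a non-food reactant, so $C=\emptyset$ and $U=R$.

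Keep your $\prec$-constraint on \textsc{F-reactant} arcs; it removes the arc $r_3\to r_2$ labelled $a$. Even so, the unique shortest cycle is $r_1-a-r_3-c-r_2-b-r_1$, of length $3$, because any cycle through $r_0$ must also pass through $r_5,r_4,r_1$. This cycle satisfies T1 and T2 and is well-formed. However, $a$ is an \textsc{F-reactant} of $r_2$, and its only net-producer in $R^*$ is $r_3$, of generation $3>2$. So T3 fails. Worse,
\[
S[\{a,b,c\},\{r_1,r_2,r_3\}]=\begin{pmatrix}0&-2&1\\ 1&0&0\\ 0&1&-1\end{pmatrix}
\]
is not semipositive, since its rows $a$ and $c$ sum to $(0,-1,0)$. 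Rerouting through the early producer $r''=r_0$ only yields longer cycles. Your fallback does not help either, because this bad cycle already contains two catalyst arcs.

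\textbf{The missing idea.} This is the heart of the paper's proof: impose extremality on the producers, not on the cycle. For each non-constructible species $x$, fix a single producer $\psi(x)$ that is $\preceq$-minimal among all reactions of $U$ having $x$ as a net-product. Then build the circuit only from arcs $\psi(x)\to r$, alternating with a map $\chi(r)\in\rho(r)\setminus(F\cup\pi_{net}(C))$.

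With this choice, any resulting circuit works, chords included. Suppose $x\in X^*$ is an \textsc{F-reactant} of $r\in R^*$. R3 gives $\bar r\prec r$ in $U$ with $x\in\pi_{net}(\bar r)$, so $\gamma(\psi(x))\le\gamma(\bar r)<\gamma(r)$. Moreover $\psi(x)\in R^*$, because it is the circuit neighbour of $x$. In the example, $\psi(a)=r_0$, and the circuit becomes $r_0-e-r_5-d-r_4-b-r_1-a-r_0$, whose $S^*$ is a permutation matrix.

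\textbf{The rest of your argument.} With this change, your Step 1 is essentially the paper's construction, and your Step 3 goes through as written. Your geometric weights $M^{-\gamma(r)}$ are in fact the right choice. The paper's weights $v_j(N)=N/i$ scale linearly, so $S^*v(N)=N\,S^*v(1)$, and letting $N\to\infty$ cannot by itself produce positivity.
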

\begin{proof}
Consider the maximal subnetwork 
\[\Gamma'_F=(X(R'),R',F\cap X(R'))\subseteq
\Gamma_F,\] which is a \textsc{CAF}. The fact that such a maximal network is
well-defined and unique is known in the literature, see for example
\cite{steel20} where the notation \textsc{max}CAF is used. To keep this
contribution self-contained, we nevertheless construct such a set from
scratch as follows. Let $R'_{(1)}$ be the set of reactions $r_{(1)}$ whose
reactants are all in the food set $F$. Iteratively, let $R'_{(n)}$ be the
set of reactions $r_{(n)}$ whose reactants are either in the food set, or
are net-products of at least one reaction $r_{(m)}$ in $R'_{(m)}$, with
$m<n$. Due to finiteness of $R$, this process is finite, and we can
consider $R'\coloneqq \bigcup_i R'_{(i)}$. By construction, $\Gamma'_F$ is a
maximal \textsc{CAF} in $\Gamma_F$, i.e., the reactions of any other
\textsc{CAF} in $\Gamma_F$ with food set in $F$ are contained in $R'$ . By
assumption, moreover, $\Gamma'_F \subsetneq \Gamma_F$, and in particular
$R''\coloneqq (R\setminus R')\neq \emptyset$.

Let $X''\coloneqq X(R'')$ and $F''\coloneqq F\cup X(R')$. The network
$\Gamma''_{F''} = (X'', R'',F''\cap X'') $ is a \textsc{RAF} itself. In
fact, property R1 and property R2 in Def.~\ref{def:RAF} directly follow
from the same property R1 and property R2 of the \textsc{RAF}
$\Gamma_F$. The \textsc{generation order} of $\Gamma''_{F''}$ also follows
from the \textsc{generation order} of $\Gamma_{F}$, with a translation of
the values of the \textsc{generation map} $\gamma(\Gamma'')$ in
$\Gamma''_{F''}$, with respect to $\gamma(\Gamma)$, due to enlargement of
the food set $F''=F\cup X(R')$ and removal of reactions associated to the
\textsc{CAF} $\Gamma'_F$, since $R''=R\setminus R'$. In particular,
\textsc{first-generation reactions} in $\Gamma''_{F''}$ are the ones with
\textsc{F-reactants} in the food set $F''$, i.e., either in the original
food set $F$ or in $X'$, and so on. Moreover, by construction,
$\Gamma''_{F''} = (X'', R'',F''\cap X'') $ contains no subnetwork that is a
\textsc{CAF} and in particular, via Prop.~\ref{prop:rafwocaf}, any reaction
$r_j$ in $R''$ has at least one reactant $x_m$, which is not in the food
set $F''$.

We now introduce two maps $\chi$ and $\psi$. The map $\chi$
assigns to each reaction $r_j\in R''$ a non-food species
$\chi(r_j)=x_m\in X''\setminus F''$, which is a reactant of $r_j$, in symbols
\begin{equation}\label{eq:chi}
\chi(r_j)=x_m\in \rho(r_j)\setminus F''.
\end{equation}
The map $\chi$ is well-defined by the construction of the \textsc{RAF}
$\Gamma''_{F''}$.

The map $\psi$ is defined on the set
$\bigcup_{r\in R''}\rho(r)\setminus F'' \subset X''$.  It assigns to each
species $x_m$ in such set a reaction $r_j\in R''$ that is minimal with
respect to the \textsc{generation order} among those reactions producing
$x_m$ as a net-product, in symbols:
\begin{equation}\label{eq:psidef}
  \psi(x_m)\in
  \min_{\preceq}\{\,r_j\in R'' \mid x_m \in \pi_{\mathrm{net}}(r_j)\,\}.
\end{equation}
The map $\psi$ is well-defined by the definition of a \textsc{RAF}, and it
also follows that $\psi(x_m)$ cannot have $x_m$ as \textsc{F-reactant},
since otherwise we would have that $x_m$ is a net product of a reaction
$r\in R''$ with $r\prec \psi(x_m)$, by property R3 in Def.~\ref{def:RAF}
and Prop.~\ref{prop:equivnet}, contradicting \eqref{eq:psidef}.

Let $r_j\in R''$. By alternate application of $\chi$ and $\psi$, we build a
sequence $(\Lambda_k)_{k\ge 0}$ of alternating reactions in $R''$ and
species in $X''\setminus F''$:
\[
  \Lambda_0 = r_j, \qquad
  \Lambda_{k+1} =
  \begin{cases}
    \chi(\Lambda_k), & \text{if $\Lambda_k\in R''$,}\\
    \psi(\Lambda_k), & \text{if $\Lambda_k\in X'' \setminus F''$.}
  \end{cases}
\]
Let $\mathcal{C}$ be the elementary circuit identified by the sequence
$(\Lambda_k)$, that is, the subsequence
$\mathcal{C} = (\Lambda_h,\Lambda_{h+1},\ldots,\Lambda_{i-1})$,
where $h<i$ are the smallest indices in $(\Lambda_k)$ such that
$\Lambda_i=\Lambda_h$. Note that due to the finiteness of $R''$ and $X''$,
we always find one such circuit, starting from any reaction
$r_j$. Different starting reactions, as well as different choices for
$\chi$ and $\psi$, may however lead to different circuits.

Let $X^*$ and $R^*$ be the species and reactions touched by
$\mathcal{C}$. By construction of the map $\chi$,
$X^* \cap F'' = \emptyset$, and in particular $X^*\cap F=\emptyset$, and
property T1 is satisfied. Moreover, $|R^*|=|X^*|$ and $\mathcal{C}$
provides the orderings for property T2. The maps $\psi$ and $\chi$ define
in particular bijections between $X^*$ and $R^*$, and guarantee that $x$ is
a net-product of $\psi(x)$ and a reactant of $\chi^{-1}(x)$, and property T2 is satisfied. 
$r^*\in R^*$ has a reactant $\chi(r^*)\in X^*$ and a product
$\psi^{-1}(r^*)\in X^*$. Thus the subnetwork $\Gamma^*(X^*,R^*)$ is
well-formed; and so is the associated stoichiometric matrix
$S^*=S[X^*,R^*]$.

To show T3, we define a partition $R^*=\bigcup_{i\ge 1} R^*_{(i)}$
on $R^*$ as follows:
\[
  R^{*}_{(i)}\coloneqq \begin{cases}
    \operatorname{min}_{\preceq}(R^*) \quad\quad &\text{for $i=1$};\\
    \\
    \operatorname{min}_{\preceq}
    \bigg(R^* \setminus \bigcup_{n=1}^{i-1}R^*_{(n)}\bigg)\quad
                                                     &\text{for $i\ge2$}.
  \end{cases}
\]
We show that for any reaction $r^*_{(n)}\in R^{*}_{(n)}$ all
\textsc{F-reactants} in $X^*$ are net products of reactions
$r^*_m\in R^*: r^*_m\prec r^*_n$., i.e.
$(\rho_F(r^*_N)\cap X^*)\subseteq \bigcup_{i=1}^{n-1} \pi_{\text{net}}
(R^*_{(i)})$. In fact, indirectly assume that there exists
$x_m\in X^*\setminus \bigcup_{i=1}^{n-1} \pi_{\text{net}} (R^*_{(i)})$
which is an \textsc{F-reactant} of $r^*_{(n)}$, i.e.
$x_m\in \rho_{\text{F}}(r^*_{(n)})$. This, in particular, implies that
$r^*_{(n)} \preceq \psi(x_m)$, since $x_m$ is a net-product of
$\psi(x_m)$. By Def.~\ref{def:RAF} of \textsc{RAF}, however, there exists a
reaction $\bar{r}\in R''$ with $\bar{r}\prec r^*_{(n)}$, where $x_m$ is a
net product. However, we have reached a contradiction, since the reaction
$\psi(x_m) \in R^*$ is defined via \eqref{eq:psidef} as belonging to the
minima over $R''$ with this property, in symbols:
\[
  \psi(x_m)\preceq \bar{r}\prec r^*_{(n)} \preceq \psi(x_m).
\]
Thus $r^*_{(n)}\in R^{*}_{(n)}$ has no \textsc{F-reactants} in
$X^*\setminus \bigcup_{i=1}^{n-1} \pi_{\text{net}} (R^*_{(i)})$. In
particular, by construction, $R^{*}_{(1)}$ has no \textsc{F-reactants} in
$X^*$, and thus no net-reactants by Prop.~\ref{prop:inclusions}. The
associated column $S[X^*,r^*]$ is therefore nonnegative, and property T3 is
satisfied.

Finally, we prove semipositivity of $S^*$. We note that no row in $S^*$ is
identically zero, since any $x_m\in X^*$ is net-product of
$\psi(x_m)\in R^*$. We can then leverage the above construction (property
T3) and define $v(N)\in \mathbb{R}^{|R^*|}_{>0}$ as:
\[v_j(N)\coloneqq\dfrac{N}{i} \quad\quad\quad \text{for $r_j\in R^*_{(i)}$},
\]
where $N>0$ is a positive number. By construction, $v_{j'}(N)-v_j(N) \to
+\infty$, for $r_{j'}\prec r_{j}$ and $N\to+\infty$. This implies that
$S^*v(N) >0$ for $N$ sufficiently large, and thus $S^*$ is semipositive.
\end{proof}

Before we proceed, a few explanatory remarks regarding 
  Thm.~\ref{thm:main} are in order.

\begin{remark}\label{rmk:thm1}
  If all species $X$ are in the food-set, i.e. $X=F$, then T1 in
  Thm.~\ref{thm:main} cannot hold. This is consistent with the statement of
  the theorem because any \textsc{RAF} where all species are in the food
  set is trivially a \textsc{CAF}, and thus Thm.~\ref{thm:main} does not
  apply.
\end{remark}

\begin{remark}\label{rmk:main4}
  The inverse sequence of the circuit $\mathcal{C}$, i.e., \[ r_1 \to x_{|X^*|} \to r_{|R^*|} \to \ldots \to x_1 \to r_1
  \]
  identifies a directed elementary circuit in the bipartite graph
  representation of the associated reaction network. In fact, by
  construction, the direction $\to$ identifies for a pair $(x_i \to r_i)$ a
  reactant/reaction relation and for a pair $(r_{i+1}\to x_{i})$ a
  reaction/product relation. 
\end{remark}

\begin{remark}\label{rmk:main5}
  Condition \emph{T2} and condition \emph{T3} jointly imply that any
  reaction $r^*\in R^*$ which is minimal in $R^*$ with respect to the
  \textsc{generation order} $\preceq$ has at least one \textsc{F-catalysts}
  but no \textsc{F-reactants} in $X^*$.
\end{remark}

\begin{remark}
  A \textsc{RAF} that is not itself a \textsc{CAF} may strictly contain a
  \textsc{CAF} as shown in~\ref{sect:examples}.\ref{egg:3}. Prop.~\ref{prop:rafwocaf} gives a stronger
  condition that characterizes the case of a \textsc{RAF} not containing
  any \textsc{CAF}. This condition is simply that any reaction $r$ has at
  least one reactant not in the food set. Any \textsc{RAF} where this
  condition applies is thus stoichiometrically autocatalytic, see also
Cor.~\ref{cor:main}.
\end{remark}

The statement of Thm.~\ref{thm:main} is rather technical. Its most direct consequence, 
  however, connects RAF theory to
  stoichiometric autocatalysis:
\begin{cor}\label{cor:main}
  Any RAF that is not itself CAF is stoichiometrically autocatalytic.
\end{cor}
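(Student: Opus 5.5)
The corollary follows directly from Thm.~\ref{thm:main} together with Def.~\ref{def:autocat}, so the plan is essentially a bookkeeping argument. Let $\Gamma_F=(X,R,F)$ be a \textsc{RAF} that is not a \textsc{CAF}. First, I would invoke Thm.~\ref{thm:main}. It provides a submatrix $S^*=S[X^*,R^*]$ of the stoichiometric matrix $S$ of $\Gamma_F$ that is well-formed in the sense of Def.~\ref{def:wellformed} and semipositive in the sense of Def.~\ref{def:semipositive}. Def.~\ref{def:autocat} calls a stoichiometric matrix autocatalytic exactly when it is well-formed and semipositive, and it calls a network stoichiometrically autocatalytic when its matrix has such an autocatalytic submatrix. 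So $S^*$ is an autocatalytic submatrix of $S$, and $\Gamma_F$ is stoichiometrically autocatalytic.

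The one point I would check explicitly is nondegeneracy, namely that $X^*$ and $R^*$ are nonempty, so that $S^*$ is a genuine submatrix and the inequality $S^*v>0$ is not vacuous. This comes from T2: the circuit $\mathcal{C}$ contains at least one reaction and one species. In the construction, $R''=R\setminus R'\neq\emptyset$ because $\Gamma_F$ is not a \textsc{CAF}, so the sequence $(\Lambda_k)$ can be started. By Prop.~\ref{prop:rafwocaf} applied to $\Gamma''_{F''}$, every reaction in $R''$ has a non-food reactant, so the map $\chi$ is defined and the circuit actually closes.

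I would also note that well-formedness has to be read at the level of the full reactions, including catalysts, and not only from the entries of $S^*$. This is already guaranteed by T2: each $r_{(i)}$ has $x_{(i)}\in X^*$ as a reactant and $x_{(i-1)}\in X^*$ as a net-product, hence as a product. No step should be a real obstacle. The only care needed is in matching the wording of Def.~\ref{def:autocat} to the conclusions of Thm.~\ref{thm:main}.
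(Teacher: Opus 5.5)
Your proposal is correct and matches the paper's proof: the paper likewise takes the submatrix $S^*=S[X^*,R^*]$ from Thm.~\ref{thm:main}, notes that it is well-formed and semipositive, and hence autocatalytic by Def.~\ref{def:autocat}, so the network is stoichiometrically autocatalytic. Your extra checks are sound and make explicit what the paper leaves implicit: $X^*$ and $R^*$ are nonempty because $R\setminus R'\neq\emptyset$, and well-formedness can be read off from the reactant/net-product relations in T2.
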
 
\begin{proof}
  Since the matrix $S^*=S[X^*,R^*]$ is semipositive and
  well-formed, it is autocatalytic. By definition, thus, the
  network $\Gamma=(X,R)$ is stoichiometrically autocatalytic.
\end{proof}

Further, Prop.~\ref{prop:equivnet} and condition T3 in Thm.~\ref{thm:main} can
be used to identify a \textsc{RAF} based on species in $X^*$ and $R^*$
as the following corollary states.
\begin{cor}\label{cor:subraf}
  Any \textsc{RAF} $\Gamma_F=(X,R,F)$ that is not itself a \textsc{CAF}
  possesses a subnetwork $\Gamma^*_{F^*}=(X(R^*),R^*,F^*)$, with food set
  $F^*=X(R^*) \setminus X^*$, which is itself a \textsc{RAF} and whose
  stoichiometric matrix $S^*[X^*,R^*]$, restricted to the non-food
  molecules, is square, well-formed, and semipositive.
\end{cor}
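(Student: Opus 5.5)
The plan is to derive Cor.~\ref{cor:subraf} directly from Thm.~\ref{thm:main} by taking $R^*$ as the reaction set and declaring every species in its support outside $X^*$ to be food. Concretely, we apply Thm.~\ref{thm:main} to obtain $S^*=S[X^*,R^*]$ with properties T1--T3, and set $F^*\coloneqq X(R^*)\setminus X^*$. The matrix-level claims then need no further work. Squareness is T2, and well-formedness and semipositivity of $S^*$ are part of the conclusion of Thm.~\ref{thm:main}. The stoichiometric matrix of $\Gamma^*_{F^*}$ restricted to the non-food species $X(R^*)\setminus F^*=X^*$ is exactly $S[X^*,R^*]$. So the only real task is to verify that $(X(R^*),R^*,F^*)$ satisfies R1--R3 of Def.~\ref{def:RAF}.

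R1 is inherited from $\Gamma_F$, since the F-catalysts of a reaction do not depend on which other reactions are present. For R3, we use the generation order $\preceq$ restricted to $R^*$, which is still a partial order. Take an F-reactant $x_m$ of some $r_j\in R^*$. If $x_m\notin X^*$, then $x_m\in F^*$. If $x_m\in X^*$, then T3 gives some $r_{j'}\in R^*$ with $r_{j'}\prec r_j$ and $S^*_{mj'}>0$, so $x_m$ is a net-product of $r_{j'}$. This is precisely condition \eqref{eq:rafc3'}, which by Prop.~\ref{prop:equivnet} is equivalent to \eqref{eq:rafc3}. The equivalence is used only within the subnetwork, so its iterative argument applies verbatim with $R^*$, $\preceq|_{R^*}$ and $F^*$.

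R2 is the only point needing care, because the F-catalysts of reactions in $R^*$ may be produced in $\Gamma_F$ by reactions outside $R^*$. The choice of $F^*$ handles this. Any F-catalyst of $r_j\in R^*$ lies in $X(R^*)$, since it is a reactant of $r_j$. Either it lies in $F^*$, or it lies in $X^*$; in the latter case T2 makes it a net-product of the next reaction on the circuit $\mathcal{C}$, which belongs to $R^*$. Hence \eqref{eq:rafc2'} holds within $R^*$, and Prop.~\ref{prop:equivnet} converts it back to \eqref{eq:rafc2}.

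The main obstacle is therefore just the legitimacy of applying Prop.~\ref{prop:equivnet} to the subnetwork. That proposition needs the R3 order to be available for its iteration, which is why R3 must be established before converting R2. No property of the ambient $\Gamma_F$ beyond T1--T3 is required.
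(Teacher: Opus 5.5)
Your proposal is correct and follows essentially the same route as the paper. You take $R^*$ from Thm.~\ref{thm:main} and set $F^*=X(R^*)\setminus X^*$. You then verify R1 by inheritance from $\Gamma_F$, R2 via T2 (each $x\in X^*$ is a net-product of its circuit successor in $R^*$), and R3 via T3 with the restricted generation order. Your version is slightly more careful than the paper's in two respects: you explicitly pass back through Prop.~\ref{prop:equivnet} to recover the F-product-but-not-F-reactant form, and you note that R3 must be established before doing this for R2.
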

\begin{proof}
  Consider the subnetwork $(X(R^*),R^*,F^*)$, where $R^*$ is constructed in the proof of
    Thm.~\ref{thm:main}. We show that properties R1, R2, and R3 in
    Def.~\ref{def:RAF} are satisfied:\\
  (R1) Since $R^* \subseteq R$, and $\Gamma_F$ is a \textsc{RAF},
  then any reaction $r\in R^*$ has at least one \textsc{F-catalyst}. As the
  species set of $\Gamma_{F^*}$ is the support of $R^*$, property R1
  follows.\\
  (R2) By T2 in Thm.~\ref{thm:main}, the \textsc{F-catalysts} of
  $R^*$ either are in the food set or are net-products of reactions in
  $X^*$, which via construction in the proof of Thm.~\ref{thm:main} satisfies $X^*\subseteq X(R^*)$, and R2 follows.\\
  (R3) By T3 in Thm.~\ref{thm:main}, the \textsc{F-reactants} of a reaction
  $r_j\in R^*$ are either in the food set $F^*$ or net-products of at least
  one reaction $r_{j'} \in R^*$ with $r_{j'}\prec r_j$.
\end{proof}

The statement can be further strengthened by assuming that each reaction
$r$ possesses at least on \textsc{F-catalyst} not in the food set, as the
following corollary states.
\begin{cor}\label{cor:subrafcat}
  Let $\Gamma_F=(X,R,F)$ be a \textsc{RAF} and assume that each reaction
  $r\in R$ possesses at least one \textsc{F-catalyst} not in the food
  set. Then $\Gamma_F=(X,R,F)$ possesses a subnetwork
  $\Gamma^*_{F^*}=(X(R^*),R^*,F^*)$, with food set
  $F^*=X(R^*) \setminus X^*$, which is itself a \textsc{RAF} and whose
  stoichiometric matrix $S^*[X^*,R^*]$, restricted to the non-food
  molecules, is square, well-formed, and semipositive. Moreover, there is a
  bijection $\chi:R^*\mapsto X^*$ such that $\chi(r)$ is an
  \textsc{F-catalyst} \mbox{of $r$.}
\end{cor}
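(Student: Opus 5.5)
We mimic the proof of Thm.~\ref{thm:main}, but run the circuit construction directly on the whole RAF $\Gamma_F$, choosing the map $\chi$ to pick \textsc{F-catalysts} rather than arbitrary reactants. Under the hypothesis, every $r\in R$ has some $x_m\in\kappa_F(r)\setminus F$. We define
\[
\chi(r)\in\kappa_F(r)\setminus F .
\]
By Prop.~\ref{prop:inclusions}, $\chi(r)$ is in particular a reactant (and product) of $r$ not in $F$. Note that the hypothesis already excludes the CAF obstruction of Prop.~\ref{prop:rafwocaf}: no reaction has all reactants in $F$. So we do not need to pass to the reduced network $\Gamma''_{F''}$, and can take $R''=R$, $F''=F$.

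Next we define $\psi$ on $\bigcup_r\kappa_F(r)\setminus F$. For a non-food catalyst $x_m$, condition \eqref{eq:rafc2'} guarantees that $x_m$ is a net-product of some reaction. We set $\psi(x_m)$ to be a $\preceq$-minimal reaction among $\{r\in R\mid x_m\in\pi_{net}(r)\}$. As in the proof of Thm.~\ref{thm:main}, minimality together with R3 and Prop.~\ref{prop:equivnet} shows that $x_m\notin\rho_F(\psi(x_m))$. Alternating $\chi$ and $\psi$ from any starting reaction produces, by finiteness, an elementary circuit $\mathcal C$. Let $X^*,R^*$ be its species and reactions. Then $\chi|_{R^*}:R^*\to X^*$ is a bijection with $\chi(r)\in\kappa_F(r)$, which gives the final claim. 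Each $r\in R^*$ has the reactant $\chi(r)\in X^*$ and the net-product $\psi^{-1}(r)\in X^*$, so $S^*=S[X^*,R^*]$ is square and well-formed, and no row of $S^*$ vanishes.

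Property T3 is proved verbatim: partition $R^*$ into successive $\preceq$-minimal layers $R^*_{(i)}$. The same contradiction argument shows that any \textsc{F-reactant} in $X^*$ of a reaction in layer $n$ is a net-product of a reaction in an earlier layer of $R^*$. Semipositivity then follows from the weights $v_j(N)=N/i$ for $r_j\in R^*_{(i)}$ with $N\to\infty$, exactly as before. This works because negative entries of $S^*$ in a column can only come from net-reactants, which are \textsc{F-reactants} by Prop.~\ref{prop:inclusions}. Catalysts contribute nonnegative entries at worst.

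Finally, we verify that $(X(R^*),R^*,F^*)$ with $F^*=X(R^*)\setminus X^*$ is a RAF, as in Cor.~\ref{cor:subraf}. R1 is inherited. For R2, each \textsc{F-catalyst} outside $F^*$ lies in $X^*$, and any $x\in X^*$ is a net-product of $\psi(x)\in R^*$; by the iterative argument of Prop.~\ref{prop:equivnet}, this upgrades to an \textsc{F-product} that is not an \textsc{F-reactant}. R3 follows from T3, with the restricted generation order.

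\textbf{Main obstacle.} The main obstacle is the semipositivity step. A column's negative entries in $X^*$ must all be dominated by positive entries produced by strictly earlier layers, including within the first layer, where entries must be nonnegative. Since $\chi(r)$ is now a catalyst rather than an arbitrary reactant, it may also be an \textsc{F-reactant}. I would check that T3 still controls this case, so that such entries are handled by the layering and not only by the catalyst choice.
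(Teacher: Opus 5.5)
Your route is the paper's. By Prop.~\ref{prop:rafwocaf} the reduced network $\Gamma''_{F''}$ is $\Gamma_F$ itself, $\chi$ is strengthened to pick $\chi(r)\in\kappa_F(r)\setminus F$, and the $\chi$--$\psi$ circuit gives $X^*,R^*$ with $\chi|_{R^*}$ as the required bijection. The sub-\textsc{RAF} is then checked as in Cor.~\ref{cor:subraf}. What you single out as the main obstacle is not one. The proof of T3 never uses $\chi$, only R3 and the minimality of $\psi$ (read, as in the paper, with respect to the total preorder induced by the generation map $\gamma$, so that minimality gives $\gamma(\psi(x_m))\le\gamma(\bar r)$). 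Hence a species $\chi(r)$ that is also an \textsc{F-reactant} of $r$ is covered by T3 like any other \textsc{F-reactant} in $X^*$. For R2 you do not need the iterative argument either: $x\in\pi_{net}(\psi(x))\subseteq\pi_F(\psi(x))$, and you already showed $x\notin\rho_F(\psi(x))$.

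The genuine gap is the weight vector, which you take over from the proof of Thm.~\ref{thm:main}; the paper's proof of this corollary defers to that argument and has the same defect. Since $v(N)=N\,v(1)$, we get $S^*v(N)=N\,S^*v(1)$, so letting $N\to\infty$ changes nothing. That the differences $v_{j'}(N)-v_j(N)$ diverge is irrelevant; what is needed is that the ratios diverge. Here is a failure within the hypotheses of the corollary: $F=\{x_1\}$ and
\[
r_1:\ \mathbf{x}_1+(x_3)\longrightarrow x_2+(x_3),\qquad r_2:\ 5x_2+(x_2)\longrightarrow x_3+(x_2).
\]
This is a \textsc{RAF} with $r_1\prec r_2$, and each reaction has a non-food \textsc{F-catalyst}. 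The construction is forced: $\chi(r_1)=x_3$, $\chi(r_2)=x_2$, $\psi(x_3)=r_2$, $\psi(x_2)=r_1$ (the $(x_2,r_2)$-entry of $S$ is $1-6=-5$). The layers are $\{r_1\}$, $\{r_2\}$, and
\[
S^*=\begin{pmatrix}1&-5\\0&1\end{pmatrix},\qquad S^*v(N)=\begin{pmatrix}-3N/2\\ N/2\end{pmatrix}\quad\text{for every }N>0,
\]
although $S^*$ is semipositive, e.g.\ via $v=(6,1)^T$.

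The repair is to use geometric weights $v_j=N^{-i}$ for $r_j\in R^*_{(i)}$ and argue row by row. For $x_m\in X^*$, let $i_0$ be the first layer in which row $m$ of $S^*$ has a nonzero entry; it exists since $S^*_{m,\psi(x_m)}>0$. If $S^*_{mj}<0$, then $x_m$ is a net-reactant, hence an \textsc{F-reactant}, of $r_j$ (Prop.~\ref{prop:inclusions}). So T3 gives $r_{j'}\in R^*$ with $r_{j'}\prec r_j$ and $S^*_{mj'}>0$, and $r_{j'}$ lies in a strictly earlier layer than $r_j$. Hence all nonzero entries of row $m$ in layer $i_0$ are positive. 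Therefore $(S^*v)_m=N^{-i_0}\bigl(c_m+O(N^{-1})\bigr)$ with $c_m>0$, which is positive for $N$ large.
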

\begin{proof}
  By Prop.~\ref{prop:rafwocaf}, the \textsc{RAF} $\Gamma_F$ does not
  contain any \textsc{CAF}. Operating as in the proof of
  Thm.~\ref{thm:main}, then, $\Gamma''_{F''}=\Gamma_F$ and we consider a
  stronger version $\tilde{\chi}$ of the map $\chi$ defined in
  \eqref{eq:chi}.  The map $\tilde{\chi}$ assigns to each reaction $r_j\in
  R$ a non-food species $\chi(r_j)=x_m\in X\setminus F$, which is an
  \textsc{F-catalysts} of $r_j$, i.e. $\chi(r_j)=x_m\in
  \kappa_F(r_j)\setminus F$.  Then the proof proceeds identical to the one
  of Cor.~\ref{cor:subraf}, with the only addition that in this case
  $\tilde{\chi}$ represents the bijection as stated.
\end{proof}

\section{Examples}
\label{sect:examples} 

For simplicity of visualization, in what follows we will use the bold
notation $\mathbf{x}_i$ for molecules in the food set $F$.

\begin{eggsample}[Inferring the \textsc{RAF} structure may not require \textsc{F-sets}
  specification.]
\label{egg:ex1} 
First, we ask whether the \textsc{RAF} structure can be inferred from the
reaction network structure alone.  We positively answer this question here in~\ref{sect:examples}.\ref{egg:ex1} for two simple networks and clarify in~\ref{sect:examples}.\ref{egg:ex2} that the answer is negative in general.
Consider a single reaction of the following form:
\[
  \mathbf{x}_1+ x_2 \underset{r_1}{\longrightarrow} 2x_2. 
\]
Clearly only one RAF $\Gamma_F=(\{x_1,x_2\},r_1,\{x_1\})$ is possibly
associated with such a reaction, namely:
\[
  \mathbf{x}_1+ (x_2) \underset{r_1}\longrightarrow x_2 + (x_2). 
\]
In particular, $\mathbf{x}_1$ is the single \textsc{F-reactant}, $x_2$ is
both \textsc{F-product} and \textsc{F-catalyst} of $r_1$. The circuit
$\mathcal{C}$ from Thm.~\ref{thm:main} in this case is simply
\[
   \mathcal{C}= r_1 - x_2 - r_1,
\]
as $x_2$ is both an \textsc{F-catalyst} and an \textsc{F-product} of
$r_1$. Stoichiometric autocatalysis is thus seen in the species $x_2$, with
associated stoichiometric matrix:
\[
    S^*=S[x_2,r_1]=2-1=1>0,
\]
which is indeed semipositive and well-formed. 

Another example that showcases a similar situation is the following.
\[
\begin{cases}
    \mathbf{x}_1 + x_2 \quad \underset{r_1}\longrightarrow  \quad x_2 + x_3\\
    \mathbf{x}_1 + x_3 \quad \underset{r_2}\longrightarrow \quad \mathbf{x}_1 + x_2,
\end{cases}
\]
with associated unique RAF $\Gamma_F=(\{x_1,x_2,x_3\},\{r_1,r_2\},\{x_1\})$, namely:
\[
\begin{cases}
    \mathbf{x}_1 + (x_2) \quad\underset{r_1}\longrightarrow \quad  x_3 + (x_2)\\
    x_3 + (\mathbf{x}_1) \quad \underset{r_2}\longrightarrow  \quad x_2 + (\mathbf{x}_1),
\end{cases}
\]
with \textsc{F-sets} specification as:
\begin{center}
\begin{tabular}{c|c c c}
 & \textsc{F-reactants} & \textsc{F-products} & \textsc{F-catalysts}\\
\hline
 $r_1$        &  $\mathbf{x}_1$ & $x_3$ & $x_2$ \\
  $r_2$       & $x_3$ & $x_2$ & $\mathbf{x}_1$
\end{tabular}.
\end{center}
Here, again, there is a unique choice of the circuit $\mathcal{C}$ of Thm.~\ref{thm:main}, i.e.
\[
  \mathcal{C}=  r_1 - x_2 - r_2 - x_3 - r_1,
\]
with associated autocatalytic stoichiometric matrix:
\[
    S^*=\begin{pmatrix}
        0 & 1\\
        1 & -1
    \end{pmatrix}.
\]
\end{eggsample}

\begin{eggsample}[Inferring the \textsc{RAF} structure may require \textsc{F-sets}
  specification.]
\label{egg:ex2} In contrast to the simple examples in~\ref{sect:examples}.\ref{egg:ex1}, even one single
reaction may be ambiguous with respect to its RAF structure, without
explicit specification of its \textsc{F-reactants}, \textsc{F-products} and
\textsc{F-catalysts}. Consider the following reaction
\[
2\mathbf{x}_1+x_2 \quad \underset{r_1}\longrightarrow \quad \mathbf{x}_1+2x_2 .
\]
One basic question is:
\begin{center}
    Which \textsc{RAF} is defined on $\Gamma_F=(\{\mathbf{x}_1,x_2\},r_1,\{\mathbf{x}_1\})$?
\end{center}
The question is not well-posed. By Prop.~\ref{prop:inclusions}, we know
that $\mathbf{x}_1$ is an \textsc{F-reactant} since it is a net-reactant
and $x_2$ is an \textsc{F-product} since it is a net-product. A priori, we
can further say that, for $\Gamma_F$ to be a RAF, $x_2$ cannot be an
\textsc{F-reactant} and must consequently be an \textsc{F-catalyst}. Still,
we are left with two valid choices for $x_1$, i.e., whether $x_1$ is also
an \textsc{F-catalyst} or not. The former choice leads to a \textsc{RAF} as
follows:
\[
\mathbf{x}_1+ (\mathbf{x}_1 + x_2) \quad \underset{r_1}\longrightarrow \quad x_2 + (\mathbf{x}_1+x_2),
\]
while the latter leads to:
\[
2\mathbf{x}_1+ (x_2) \quad \underset{r_1}\longrightarrow \quad \mathbf{x}_1+ x_2 + (x_2).
\]
The fact that such reaction is stoichiometrically autocatalytic is
independent of the above choice: in both cases, we get that circuit
$\mathcal{C}$ is
\[
   r_1- x_2 - r_1, 
\]
with associated 1-dimensional autocatalytic stoichiometric matrix:
\[
    S^*=S[x_2,r_1]=1.
\]
\end{eggsample}



\begin{eggsample}[\textsc{RAF} vs \textsc{CAF} order and stoichiometric
    autocatalyticity of \textsc{CAF}s.]
  \label{egg:4}
Consider the following \textsc{CAF}\begin{equation}\label{ex:2:CAF}
\begin{cases}
        \mathbf{x}_1 + (\mathbf{x}_2) \quad\underset{r_1}\longrightarrow\quad x_3 + (\mathbf{x}_2)\\
        \mathbf{x_2} + (x_3) \quad\underset{r_2}\longrightarrow \quad x_4 + (x_3)
    \end{cases}.
\end{equation}
Since any \textsc{CAF} is in particular a \textsc{RAF}, we have both the \textsc{CAF order} and \textsc{RAF generation order} on $\{r_1,r_2\}$. However, they are not the same. In fact, according to the \textsc{CAF} order, 
\[r_1 \sqsubset r_2,
\]
since the reactant $x_3$ of $r_2$ is not in the food set, but it is produced by reaction $r_1$. On the other hand, $x_3$ is an \textsc{F-catalyst} of $r_2$, and hence for the \textsc{generation order} $\preceq$ both reactions are of \textsc{first generation}:
\[
    \gamma(r_1) = \gamma(r_2) =1,
\]
since all the \textsc{F-reactants} of both reactions are in the food set. Example \eqref{ex:2:CAF}, being a \textsc{CAF}, also shows that Thm.~\ref{thm:main} does not necessarily hold for a \textsc{RAF}, which is a \textsc{CAF}. Indeed, the stoichiometric matrix 
\[
    S=\begin{pmatrix}
        -1 &  0\\
        0  & -1\\
        1  &  0\\
        0  & 1
    \end{pmatrix}
\]
does not possess any autocatalytic submatrix and such \textsc{CAF} is thus not stoichiometrically autocatalytic.

In turn, a CAF can be as well stoichiometrically autocatalytic as, for example,
\[
\mathbf{x}_1+(\mathbf{x}_2)\quad \underset{r_1}\longrightarrow\quad  2\mathbf{x}_1 + (\mathbf{x}_2),
\]
which is stoichiometrically autocatalytic in the species $\mathbf{x}_1$.
\end{eggsample}

\begin{eggsample}[A \textsc{RAF} that strictly contains a \textsc{CAF}.]
  \label{egg:3}
  The following \textsc{RAF}
  \[ \begin{cases}
    \mathbf{x_1} + (\mathbf{x_2}) \quad \overset{r_1}\longrightarrow \quad x_3 + (\mathbf{x_2}) \\
    x_3 + (x_5) \quad \overset{r_2}\longrightarrow \quad x_4 + (x_5) \\
    x_4 + (x_3) \quad \overset{r_3}\longrightarrow \quad x_5 + (x_3) \\
  \end{cases}\]
  contains, as a subnetwork, the \textsc{CAF}
  \[
  \mathbf{x_1} + (\mathbf{x_2}) \quad \overset{r_1}\longrightarrow \quad x_3 + (\mathbf{x_2}),
  \]
  but it is not a \textsc{CAF} itself, since reaction $r_2$ is catalyzed by
  the species $x_5$, which is only produced by a following (in
  \textsc{generation order}) reaction $r_3$. Consistently, it is
  stoichiometric autocatalytic via the following matrix $S^*$:
  \[S^*=S[\{x_4,x_5\},\{r_2,r_3\}]=\begin{pmatrix}
  1 & -1\\
  0 &  1
  \end{pmatrix},\]
  which is indeed autocatalytic, being semipositive and well-formed. The
  associated circuit $\mathcal{C}$ per Thm.~\ref{thm:main} is
  \[
  \mathcal{C}=r_2 - x_5 - r_3 - x_4 -r_2.
  \]
\end{eggsample}

\begin{eggsample}[Many \textsc{subRAFs} and only one $S^*$.]
  \label{egg:5}
  It is not possible to infer the number of distinct \textsc{subRAFs} from
  the number of matrices $S^*$ satisfying T1-T2-T3 in
  Thm.~\ref{thm:main}. In fact, consider the following \textsc{RAF}
  $\Gamma_F=(X,R,F)$:
\[
    \begin{cases}
        \mathbf{x}_1 + (x_8) \quad\underset{r_1}\longrightarrow \quad x_3 +
        (x_8)\\ \mathbf{x}_2 + (x_8)\quad\underset{r_2}\longrightarrow
        \quad x_3 + (x_8)\\ x_3 + (x_8) \quad \underset{r_3}\longrightarrow
        \quad x_4 + (x_8)\\ \mathbf{x}_5 +
        (x_4)\quad\underset{r_4}\longrightarrow \quad x_7 +
        (x_4)\\ \mathbf{x}_6 + (x_4)\quad\underset{r_5}\longrightarrow
        \quad x_7 + (x_4)\\ x_7 + (x_4) \quad \underset{r_6}
        \longrightarrow \quad x_8 + (x_4)
    \end{cases}
\]
It can readily be seen that any subnetwork $\Gamma'_F=\{X(R'),R',$ $F\cap X(R')\}$ made of reactions $R'$ that include $r_3,$ $r_6$, and  either $r_1$ or $r_2$ (or both) and either $r_4$ or $r_5$ (or both) is itself a \textsc{RAF}. The number of such \textsc{subRAFs} is $3^2=9$. Nevertheless, all of them share the same, unique, stoichiometric submatrix $S^*$, which is autocatalytic and satisfies T1-T2-T3. This matrix is defined over species $X^*=\{x_4,x_8\}$ and reactions $R^*=\{r_3,r_6\}$:
\[S[X^*,R^*]=\begin{pmatrix}
        1 & 0\\
        0 & 1
    \end{pmatrix},
\]
with associated circuit $\mathcal{C}$:
\[ r_3- x_8 - r_6 - x_4 - r_3.
\]
In this example, we can as well apply Cor.~\ref{cor:subraf} to define an enlarged food set $F^*=X(R^*)\setminus X^*=\{x_3,x_7\}$, and the associated \textsc{RAF} $\Gamma^*_{F^*}=(X(R^*),R^*,F^*)$, which reads:
\[
    \begin{cases}
        \mathbf{x}_3 + (x_8) \quad \underset{r_3}\longrightarrow \quad x_4 + (x_8)\\       
        \mathbf{x}_7 + (x_4) \quad \underset{r_6} \longrightarrow \quad x_8 + (x_4)
    \end{cases}\quad.
\]
\end{eggsample}
\begin{eggsample}[Only one \textsc{RAF} and many $S^*$.]
  \label{egg:6}   Similar in spirit to \ref{sect:examples}.\ref{egg:4}, but complementary, the construction of
  $S^*$ in the proof of Thm.~\ref{thm:main} is not unique. It is based on
  two maps $\chi : R \to X$ and $\psi : X \to R$. The map $\chi$ assigns to
  each reaction $r$ one of its \textsc{F}-catalysts (if not in the food
  set), while $\psi$ assigns to each species $x$ a reaction $r$ in which
  $x$ is a net product, chosen minimally with respect to the
  \textsc{generation order}. This construction involves two sources of
  arbitrariness: the choice of an \textsc{F}-catalyst in $\chi$ when a
  reaction admits more than one, and the choice of a reaction in $\psi(x)$
  when $x$ is produced by several reactions of the same minimal generation.
  It is therefore possible to have one \textsc{RAF} $\Gamma_F=(X,R,F)$,
  which does not admit any \textsc{subRAF}
  $\Gamma'_F=(X(R'),R',F\cap X(R'))$, but which admits multiple choice for
  the autocatalytic stoichiometric submatrix $S^*$.

We offer here two examples accordingly. The first reads
\[
    \begin{cases}
        \mathbf{x}_1 + (x_2 + x_3) \quad &\underset{r_1}\longrightarrow \quad x_4 + (x_2+x_3)\\
        x_4 + (x_4) \quad &\underset{r_2}\longrightarrow \quad x_2 + x_3 + (x_4)
    \end{cases}\quad,
\]
where we note that reaction $1$ has indeed two \textsc{F-catalysts}. Depending on this choice, we may identify two autocatalytic stoichiometric matrices $S^*_1$ and $S^*_2$, which are:
\begin{equation}\label{eq:ex2S1}
    \begin{split}
    S^*_1&=S[\{x_2,x_4\},R]=\begin{pmatrix}
       0 & 1\\
       1 & 0
    \end{pmatrix} \\[0.5em] \text{and} \quad S^*_2&=S[\{x_3,x_4\},R]=\begin{pmatrix}
       0 & 1\\
       1 & 0
        \end{pmatrix},
    \end{split}
\end{equation}
with associated circuits $\mathcal{C}$ as per T2 in Thm.~\ref{thm:main}:
\[
    \begin{split}
    \mathcal{C}_1&=r_1- x_2 - r_2 - x_4 - r_1\quad \\[0.5em]\text{and} 
    \quad \mathcal{C}_2&=r_1- x_3 - r_2 - x_4 - r_1.
    \end{split}
\]
Building on the second source of arbitrariness, we have the following example: 
\[
\begin{cases}
  \mathbf{x}_1 + (x_5) \quad &\underset{r_1}\longrightarrow \quad x_2 + x_3+ (x_5)\\
  \mathbf{x}_1 + (x_5) \quad &\underset{r_2}\longrightarrow \quad x_2 + x_4 + (x_5)\\
x_3+x_4+(x_2)\quad &\underset{r_3}\longrightarrow \quad x_5 + (x_2)
\end{cases}\quad,
\]
where $x_2$ is net-product both of $r_1$ and $r_2$, both of \textsc{first generation}: $\gamma(r_1)=\gamma(r_2)=1$. In particular, again, we have two autocatalytic stoichiometric matrices $S^*_1$ and $S^*_2$, which are:
\begin{equation}\label{eq:ex2S2}
    \begin{split}
        S^*_1&=S[\{x_2,x_5\},\{r_1,r_3\}]=\begin{pmatrix}
       1 & 0\\
       0 & 1
    \end{pmatrix} \\[0.5em] 
    \text{and}  \quad S^*_2&=S[\{x_2,x_5\},\{r_2,r_3\}]=\begin{pmatrix}
       1 & 0\\
       0 & 1
    \end{pmatrix},     
    \end{split}   
\end{equation}
with associated circuits $\mathcal{C}$ as per T2 in Thm.~\ref{thm:main}:
\[
    \begin{split}
        \mathcal{C}_1&=r_1- x_5 - r_3 - x_2 - r_1\\[0.5em] 
        \text{and} \quad \mathcal{C}_2&=r_2- x_5 - r_3 - x_2 - r_2.    
    \end{split}
\]

In both examples \eqref{eq:ex2S1} and \eqref{eq:ex2S2}, the stoichiometric matrices $S^*_1$ and $S^*_2$ `differ' only in the associated species $X^*$ and $R^*$. We have deliberately made this choice to underline the symmetry of the construction and the underlying arbitrariness. Examples based on the same logic can be easily constructed to provide different matrices by changing the involved stoichiometric coefficients.
\end{eggsample}

\begin{eggsample}[Stoichiometric autocatalysis involving reactions with the
    same \textsc{F-reactants} and \textsc{F-products}.]
  \label{egg:7} In the framework adopted here, two reactions $r_1$ and $r_2$ with the same \textsc{F-reactants} and \textsc{F-products}, but different \textsc{F-catalysts}, are treated as distinct. In contrast, the original \textsc{RAF} framework considers such reactions as one, see Remark~\ref{rmk:eqclas}. As a consequence, the construction of the autocatalytic submatrix $S^*$ in Thm.~\ref{thm:main} may involve reactions that share the same \textsc{F-reactants} and \textsc{F-products}, as reactions $r_1$ and $r_2$ in the following example:
\[
    \begin{cases}
        \mathbf{x}_1+ (x_4) \quad \underset{r_1}\longrightarrow \quad x_2 + x_3 + (x_4)\\
         \mathbf{x}_1+ (x_5) \quad \underset{r_2}\longrightarrow\quad  x_2 + x_3 + (x_5)\\
        x_2 + (x_2) \quad \underset{r_3}\longrightarrow \quad x_4 + (x_2)\\
        x_3 + (x_3) \quad \underset{r_4}\longrightarrow \quad  x_5 + (x_3)
    \end{cases},
\]
The subset $X^*=\{x_2,x_3,x_4,x_5\}$ identifies an autocatalytic stoichiometric submatrix $S^*$
\[
S^*=S[X^*, R]=\begin{pmatrix}
    1 & 1 & -1 & 0\\
    1 & 1 & 0 & -1\\
    0 & 0 & 1 & 0\\
    0 & 0 & 0 & 1    
\end{pmatrix}
\]
with associated circuit $\mathcal{C}$
\[
\mathcal{C}= r_1- x_4- r_3 - x_2 - r_2 - x_5 - r_4 - x_3 - r_1,
\]
which contains both $r_1$ and $r_2$. This shows that stoichiometric autocatalysis may arise from the interplay between different catalyzed versions of the `same \textsc{F-reaction}' (i.e., with identical \textsc{F-reactants} and \textsc{F-products}). Nevertheless, the removal of either $r_1$ or $r_2$ necessarily identifies \textsc{subRAFs} $\Gamma'_F=(X, R\setminus \{r_1\}, F)$ and $\Gamma''_F=(X, R\setminus \{r_2\}, F)$, for which Thm.~\ref{thm:main} still applies and yields autocatalytic submatrices containing only one of $r_1$ or $r_2$, namely
\[
    \begin{split}
    S^*_1&=[\{x_2,x_4\},\{r_1,r_3\}]=\begin{pmatrix}
        1 & -1\\
        0 & 1
    \end{pmatrix}\\[0.5em]
    \text{and}\quad S^*_2&=[\{x_3,x_5\},\{r_2,r_4\}]=\begin{pmatrix}
        1 & -1\\
        0 & 1
    \end{pmatrix},
    \end{split}
\]
with associated circuits
\[
\mathcal{C}_1= r_1- x_4- r_3 - x_2 - r_1\quad \text{and}\quad \mathcal{C}_2=r_2 - x_5 - r_4 - x_3 - r_2.
\]
In conclusion, autocatalytic submatrices as in Thm.~\ref{thm:main} may involve reactions sharing the same \textsc{F-reactants} and \textsc{F-products}, but there also always exist such submatrices that do not involve multiple instances of these reactions.
\end{eggsample}

\begin{eggsample}[Not all stoichiometrically autocatalytic subnetworks
    possess properties T2-T3 in Thm.~\ref{thm:main}.]
  \label{egg:8} The statement of Thm.~\ref{thm:main} on the structural features of the autocatalytic stoichiometric submatrix $S^*$ follows specifically from the construction of a \textsc{RAF} that is not a \textsc{CAF}. Such features do not need to hold in general for stoichiometrically autocatalytic subnetworks.

Consider the following network:
\[
    \begin{cases}
         \mathbf{x}_1 + (x_3) \quad\underset{r_1}\longrightarrow \quad  x_2 +(x_3)\\
        x_2  + (x_2) \quad \underset{r_2}\longrightarrow \quad x_3+x_4 + (x_2)\\
        x_3 + (x_3) \quad \underset{r_3}\longrightarrow \quad x_2 + (x_3)\\
        x_4 + (x_4) \quad \underset{r_4}\longrightarrow \quad x_2 + (x_4)
    \end{cases},
\]
which is easily verifiable to be a \textsc{RAF}, but not a \textsc{CAF}. Following the construction in Thm.~\ref{thm:main}, we identify the stoichiometrically autocatalytic submatrix 
\[
S^*=[\{x_2,x_3\},\{r_1,r_2\}]=\begin{pmatrix}
 1 & -1\\
 0 & 1
\end{pmatrix}
\]
with associated circuit $\mathcal{C}$:
\[
    r_1 - x_3 - r_2 -  x_2 - r_1.
\]
However, the stoichiometric submatrix:
\[
    S_C[\{x_2,x_3,x_4\},\{r_2,r_3,r_4\}]=\begin{pmatrix}
        -1 & 1 & 1\\
        1 & -1 & 0\\
        1 & 0 & -1
    \end{pmatrix},
\]
is as well autocatalytic. In the literature \cite{blokhuis_universal_2020}, such type of autocatalytic network is referred to as an \emph{autocatalytic core of type III}. Conditions T2-T3 stated in Thm.~\ref{thm:main} do not hold for $S_C$: there is no elementary circuit where all species $\{x_2,x_3,x_4\}$ and reactions $\{r_2,r_3,r_4\}$ belong, and all columns of $S$ possess one negative entry. 

Finally, via Remark~\ref{rmk:thm1}, we repeat that we may as well consider stoichiometrically autocatalytic \textsc{RAFs} where all species are in the food set. However, since such networks are always \textsc{CAFs}, it is natural that the conditions T1-T2-T3 do not necessarily apply.
\end{eggsample}

\section{Discussion}

In this paper, we showed the intimate relationship between two frameworks
of autocatalysis that superficially appear distinct. The \textsc{RAF}
framework is ignorant of any quantitative stoichiometric consideration,
while the definition of \emph{stoichiometric autocatalysis} is grounded in
the matrix concept of semipositivity, which requires quantitative
specification of the stoichiometric coefficients.

The main result, Thm.~\ref{thm:main}, states that the stoichiometric matrix
of any \textsc{RAF} that does not fall into the special case of being
itself a \textsc{CAF} necessarily contains at least one autocatalytic
submatrix. This defines such a RAF, viewed as a reaction network, as
stoichiometrically autocatalytic. We achieved this result by explicitly
constructing a square autocatalytic submatrix of the stoichiometric matrix
using only reactant/product relations and -- consistently with the RAF
formulation -- without requiring any explicit specification of
stoichiometric coefficients. In particular, such a construction is possible
whenever all reactions possess at least one reactant that is not in the
food set. Since any \textsc{RAF} that is not a \textsc{CAF} must contain a
\textsc{subRAF} (with respect to a suitably enlarged food set) with this
property, the result follows.

We also note that the constructed matrix is -- up to reordering of the
columns -- an instance of a so-called Child-Selection matrix
\cite{vassena_unstable_2024}, where each species is bijectively associated
with a reaction in which it appears as a reactant. When each reaction has at least one catalyst not in the food set, this reactant can be chosen to be one of
its catalysts. The relevance of Child-Selection matrices in the analysis of
dynamics, stability, and bifurcations of steady states has been discussed
in the literature \cite{vassena_unstable_2024,
  blokhuis_stoichiometric_recipes}.

The examples in Sec.~\ref{sect:examples} further illustrate the subtlety of
the relationship between these two frameworks. In particular, they show
that the \textsc{RAF} structure cannot, in general, be inferred solely from
the specification of the food set (see Examples in~\ref{sect:examples}.\ref{egg:ex1} and \ref{sect:examples}.\ref{egg:ex2}). Such an inference could, however, always be achieved by
enforcing additional structural requirements; for instance, one may
identify \textsc{F}-reactants with net reactants and \textsc{F}-catalysts
with catalysts, an assumption that appears consistent with the original
biological motivation.

At the same time, the identification of \textsc{RAF}s through
stoichiometric autocatalysis remains delicate. As illustrated
in \ref{sect:examples}.\ref{egg:5} and \ref{sect:examples}.\ref{egg:6}, the correspondence between the two objects
is neither injective nor surjective: multiple distinct \textsc{RAFs} may
correspond to a single autocatalytic stoichiometric submatrix, while a
single \textsc{RAF} may admit several independent autocatalytic
submatrices.

Finally, the example in \ref{sect:examples}.\ref{egg:8} highlights that the
structure of the autocatalytic submatrix described in Thm.~\ref{thm:main}
is specific to the \textsc{RAF} setting and need not be present, in
general, in stoichiometric autocatalysis per se. A more systematic
investigation of this specific structure appears as a natural direction for
future work.

\bigskip

\textbf{Acknowledgments:} This work has been supported by the Novo Nordisk
Foundation (grant NNF21OC0066551 `MATOMIC'), during a visit of Wim Hordijk
in Leipzig.  Research in the Stadler lab is support by the BMBF (Germany)
through DAAD project 57616814 (SECAI, School of Embedded Composite AI), and
jointly with SMWK (Saxony) through the \emph{Center for Scalable Data
  Analytics and Artificial Intelligence Dresden/Leipzig} (SCADS24B).

\bibliographystyle{cas-model2-names}
\bibliography{references.bib}

\end{document}